\DeclareMathAlphabet{\mathpzc}{OT1}{pzc}{m}{it}
\begin{document}

\theoremstyle{plain}
\newtheorem{theorem}{Theorem}[section]
\newtheorem{lemma}[theorem]{Lemma}
\newtheorem{proposition}[theorem]{Proposition}
\newtheorem{corollary}[theorem]{Corollary}

\theoremstyle{definition}
\newtheorem{remark}[theorem]{Remark}
\newtheorem{example}[theorem]{Example}
\newtheorem{assumption}[theorem]{Assumption}

\newcommand{\Law}{\ensuremath{\mathop{\mathrm{Law}}}}
\newcommand{\loc}{{\mathrm{loc}}}
\newcommand{\Log}{\ensuremath{\mathop{\mathcal{L}\mathrm{og}}}}

\let\SETMINUS\setminus
\renewcommand{\setminus}{\backslash}

\def\stackrelboth#1#2#3{\mathrel{\mathop{#2}\limits^{#1}_{#3}}}

\renewcommand{\theequation}{\thesection.\arabic{equation}}
\numberwithin{equation}{section}

\newcommand{\prozess}[1][L]{{\ensuremath{#1=(#1_t)_{0\le t\le T}}}\xspace}
\newcommand{\prazess}[1][L]{{\ensuremath{#1=(#1_t)_{0\le t\le T^*}}}\xspace}
\newcommand{\scal}[2]{\ensuremath{\langle #1, #2 \rangle}}

\def\F{\ensuremath{\mathcal{F}}}
\def\R{\ensuremath{\mathbb{R}}}
\def\C{\ensuremath{\mathbb{C}}}
\def\bF{\mathbf{F}}
\def\V{\mathbb{V}}

\def\Rmz{\R\setminus\{0\}}
\def\Rdmz{\R^d\setminus\{0\}}
\def\Rnmz{\R^n\setminus\{0\}}
\def\Rp{\mathbb{R}_+}

\def\la{\ensuremath{L^1(\R)}}
\def\lad{\ensuremath{L^1(\R^d)}}
\def\lat{\ensuremath{L^2(\R)}}
\def\lap{\ensuremath{L^\infty(\R)}}
\def\labc{\ensuremath{L^1_{\text{bc}}(\R)}}

\def\lev{L\'{e}vy\xspace}
\def\lk{L\'{e}vy--Khintchine\xspace}
\def\smmg{semimartingale\xspace}
\def\mg{martingale\xspace}
\def\tih{time-inhomogeneous\xspace}
\def\chartri{\ensuremath{(b,\sigma,\nu)}}
\def\num{num\'{e}raire\xspace}

\def\eqlaw{\ensuremath{\stackrel{\mathrrefersm{d}}{=}}}

\def\dsdx{\ensuremath{(\ud s, \ud x)}}
\def\dtdx{\ensuremath{(\ud t, \ud x)}}

\def\intrr{\ensuremath{\int_{\R}}}

\def\ud{\ensuremath{\mathrm{d}}}
\def\e{\mathrm{e}}
\def\dt{\ud t}
\def\ds{\ud s}
\def\dx{\ud x}
\def\dy{\ud y}
\def\dz{\ud z}
\def\du{\ud u}
\def\icc{\mathpzc{i}}
\def\ecc{\mathbf{e}_\mathpzc{i}}

\def\EM{\ensuremath{(\mathbb{EM})}\xspace}
\def\ES{\ensuremath{(\mathbb{ES})}\xspace}
\def\AC{\ensuremath{(\mathbb{AC})}\xspace}

\def\ott{{0\leq t\leq T}}

\def\bg{\ensuremath{\bar g}}
\def\logs{\mathpzc s}

\title[Analysis of valuation formulas]
      {Analysis of fourier transform valuation formulas and applications}

\author[E. Eberlein]{Ernst Eberlein}
\author[K. Glau]{Kathrin Glau}
\author[A. Papapantoleon]{Antonis Papapantoleon}

\address{Department of Mathematical Stochastics, University of Freiburg,
        Eckerstr. 1, 79104 Freiburg, Germany}
\email{eberlein@stochastik.uni-freiburg.de}

\address{Department of Mathematical Stochastics, University of Freiburg,
        Eckerstr. 1, 79104 Freiburg, Germany}
\email{glau@stochastik.uni-freiburg.de}

\address{Institute of Mathematics, TU Berlin, Stra\ss e des 17. Juni 136,
         10623 Berlin, Germany \& Quantitative Products
         Laboratory, Deutsche Bank AG, Alexanderstr. 5, 10178 Berlin,
         Germany}
\email{papapan@math.tu-berlin.de}

\keywords{option valuation; Fourier transform; semimartingales; \lev processes;
          stochastic volatility models; options on several assets}

\subjclass[2000]{91B28; 42B10; 60G48}

\thanks{We would like to thank Friedrich Hubalek and Gabriel Maresch for
        valuable discussions. K.~G. would like to thank the DFG for financial
        support through project EB66/11-1, and the Austrian Science Fund (FWF)
        for an invitation under grant P18022. A.~P. gratefully acknowledges the
        financial support from the Austrian Science Fund (FWF grant Y328,
        START Prize). \\\indent
        We would like to thank the two anonymous referees for their careful
        reading of the manuscript and their valuable suggestions that have
        improved the paper.}

\date{}
\maketitle

\frenchspacing
\pagestyle{myheadings}

\begin{abstract}
The aim of this article is to provide a systematic analysis of the conditions
such that Fourier transform valuation formulas are valid in a general framework;
i.e. when the option has an arbitrary payoff function and depends on the path
of the asset price process. An interplay between the conditions on the payoff
function and the process arises naturally. We also extend these results to the
multi-dimensional case, and discuss the calculation of Greeks by Fourier
transform methods. As an application, we price options on the minimum of two
assets in \lev and stochastic volatility models.
\end{abstract}

\section{Introduction}

Since the seminal work of \citeN{CarrMadan99} and \citeN{Raible00} on the
valuation of options with Fourier transform methods, there have been several
articles dealing with extensions and analysis of these
valuation formulas. This literature focuses on the extension of the method to
other situations, e.g. the pricing of exotic or multi-asset derivatives, or
on the analysis of the discretization error of the fast Fourier transform.

The article of \citeN{BorovkovNovikov02} deals with the application of Fourier
transform valuation formulas for the pricing of some exotic options, while
\shortciteN{HubalekKallsenKrawczyk06} use similar techniques for hedging
purposes. Lee \citeyear{Lee04} provides an analysis of the discretization error in the
fast Fourier transform, while \citeN{Lord08} extends the method to the pricing
of options with early exercise features. Recently, \citeN{HubalekKallsen03},
Biagini et al.
\citeyear{BiaginiBregmanMeyerBrandis08} and \citeN{HurdZhou09} extend the
method to accommodate options on several assets, considering basket options,
spread options and catastrophe insurance derivatives.
\shortciteN{DufresneGarridoMorales09} also consider the valuation of payoffs
arising in insurance mathematics by Fourier methods. In addition, the books
of \citeN{ContTankov03}, Boyarchenko and Levendorski\v{\i}
\citeyear{BoyarchenkoLevendorskii02book} and
\citeN{Schoutens03} are also discussing Fourier transform methods for option
pricing. Let us point out that all these results are intimately related to
Parseval's formula, cf. \citeN[VI.2.2]{Katznelson04}.

The aim of our article is to provide a systematic analysis of the conditions
required for the \textit{existence} of Fourier transform valuation formulas in
a general framework: i.e. when the underlying variable can depend on the path
of the price process and the payoff function can be discontinuous. Such an
analysis seems to be missing in the literature.

In their work, \citeN{CarrMadan99}, \citeN{Raible00} and most others are
usually imposing a \emph{continuity} assumption, either on the payoff function
or on the random variable (i.e. existence of a Lebesgue density). However, when
considering e.g. a one-touch option on a \lev-driven asset, both  assumptions
fail: the payoff function is clearly discontinuous, while a priori not much is
known about the existence of a density for the distribution of the supremum of
a \lev process. Analogous situations can also arise in higher dimensions.

The key idea in Fourier transform methods for option pricing lies in the
separation of the \textit{underlying process} and the \textit{payoff function}.
We derive conditions on the moment generating function of the underlying random
variable and the Fourier transform of the payoff function such that
Fourier based valuation formulas hold true in one and several dimensions. An
interesting interplay between the continuity conditions imposed on the payoff
function and the random variable arises naturally. We also derive a result
that allows to easily verify the conditions on the payoff function (cf. Lemma
\ref{sobo}).

The results of our analysis can be briefly summarized as follows: for general
continuous payoff functions or for variables, whose distribution has a Lebesgue
density, the valuation formulas using Fourier transforms are valid as Lebesgue
integrals, in one and several dimensions. When the payoff function is discontinuous
and the random variable might not possess a Lebesgue density then, in dimension
one, we get pointwise convergence of the valuation formulas under additional
assumptions, that are typically satisfied. In several dimensions pointwise
convergence fails, but we can deduce the valuation function as an $L^2$-limit.

In addition, the structure of the valuation formulas allows us to derive easily
formulas for the sensitivities of the option price with respect to the various
parameters; otherwise, Malliavin calculus techniques or cubature formulas have
to be employed, cf. e.g. \shortciteN{FournieLasryLebuchouxLionsTouzi99},
\citeN{Teichmann06} and \citeN{KohatsuHigaYasuda08}. We  discuss results
regarding the sensitivities with respect to the initial value, i.e. the delta
and the gamma. It turns out that the trade-off between continuity conditions on
the payoff function and the random variable established for the valuation
formulas, becomes now a trade-off between integrability and smoothness
conditions for the calculation of the sensitivities.

The valuation formulas allow to compute prices of European options very fast,
hence they allow the efficient calibration of the model to market data for a
large variety of driving processes, such as \lev processes and affine stochastic
volatility models. Indeed, for \lev and affine processes the moment generating
function is usually known explicitly, hence these models are tailor-made for
Fourier transform pricing formulas.

We also mention here that the Fourier transform based approach can be applied
for the efficient computation of prices in other frameworks as well.
An important area is the
valuation of interest rate derivatives in \lev driven models. \lev term
structure models were developed in a series of papers in the last ten years;
this development is surveyed in \citeN{EberleinKluge06}.
For the Fourier based formulas we mention the two papers by
Eberlein and Kluge \citeyear{EberleinKluge04,EberleinKluge05}, where caps,
floors, and swaptions as well as interest rate digital and range digital options
are discussed; furthermore Eberlein and Koval \citeyear{EberleinKoval06},
where cross currency derivatives are considered and 
Eberlein, Kluge, and Sch\"onbucher \citeyear{EberleinKlugeSchoenbucher06}, where
pricing formulas for credit default swaptions are derived. Moreover, in the
framework of the `affine LIBOR' model (cf.
\shortciteNP{KellerResselPapapantoleonTeichmann09}) caps and swaptions can be
easily priced by Fourier based methods.

This paper is organized as follows: in Section \ref{gval} we present valuation
formulas in the single asset case, and in Section \ref{Rdval} we deal with the
valuation of options on several assets. In Section \ref{greeks} we discuss
sensitivities. In Section \ref{ch3:payoffs} we review examples of commonly
used payoff functions, in dimension one and in multiple dimensions. In Section
\ref{LA} we review L\'evy and affine processes. Finally, in Section
\ref{ch3:sc5} we provide numerical examples for the valuation of options on
several assets in \lev and affine stochastic volatility models.

\section{Option valuation: single asset}
\label{gval}

\subsection*{1.}
Let $\mathscr B=(\Omega,\F,\bF,P)$ be a stochastic basis in the sense of Jacod and 
Shiryaev \citeyear[I.1.3]{JacodShiryaev03}, where $\F=\F_T$ and $\bF
=(\F_t)_{0\le t\le T}$. We model the price process of a financial
asset, e.g. a stock or an FX rate, as an \textit{exponential
semimartingale} $S=(S_t)_{0\le t\le T}$, i.e. a stochastic process
with representation
\begin{equation}\label{ch3:eq1}
 S_t = S_0 \e^{H_t},\qquad 0\le t\le T
\end{equation}
(shortly: $S=S_0\e^H$), where $H=(H_t)_{0\le t\le T}$ is a
semimartingale with $H_0=0$.

Every semimartingale $H=(H_t)_{0\le t\le T}$
admits a \emph{canonical representation}
\begin{equation}\label{ch3:eq2}
 H = B + H^c+h(x)*(\mu-\nu) + (x-h(x))*\mu,
\end{equation}
where $h=h(x)$ is a \emph{truncation function}, $B=(B_t)_{0\le t\le T}$
is a predictable process of bounded variation, $H^c=(H_t^c)_{0\le t\le T}$
is the continuous martingale part of $H$ with predictable quadratic
characteristic $\langle H^c\rangle=C$, and $\nu$ is the predictable
compensator of the random measure of jumps $\mu$ of $H$. Here $W*\mu$
denotes the integral process of $W$ with respect to $\mu$, and $W*(\mu-\nu)$
denotes the stochastic integral of $W$ with respect to the compensated random
measure $\mu-\nu$; cf. \citeN[Chapter II]{JacodShiryaev03}.

Let $\mathcal M(P)$, resp. $\mathcal M_\loc(P)$, denote the class of
all martingales, resp. local martingales, on the given stochastic
basis $\mathscr B$.

Subject to the assumption that the process $1_{\{x>1\}}\e^x*\nu$ has
bounded variation, we can deduce the martingale condition
\begin{equation}\label{ch3:eq3}
 S = S_0\e^H\in\mathcal M_\loc(P)
  \Leftrightarrow
 B + \frac{C}{2} + (\e^x-1-h(x))*\nu = 0;
\end{equation}
cf. \shortciteN{EberleinPapapantoleonShiryaev06} for details. The martingale
condition can also be expressed in terms of the cumulant process $K$
associated to $(B,C,\nu)$, i.e. $K(1)=0$; for the cumulant process see
\citeN{JacodShiryaev03}.

Throughout this work, we assume that $P$ is an (equivalent) martingale
measure for the asset $S$ and the martingale condition is in force; moreover,
for simplicity we assume that the interest rate and dividend yield are zero.
By no-arbitrage theory the price of an option on $S$ is calculated as its
discounted expected payoff.

\subsection*{2.}
Let $Y=(Y_t)_{0\le t\le T}$ be a stochastic process on the given basis. We
denote by $\overline{Y}=(\overline{Y}_t)_{0\le t\le T}$ and
$\underline{Y}=(\underline{Y}_t)_{0\le t\le T}$ the supremum and the infimum
processes of $Y$ respectively, i.e.
\begin{displaymath}
 \overline{Y}_t = \sup_{0\leq u\leq t}Y_u
  \quad \mathrm{ and } \quad
 \underline{Y}_t = \inf_{0\leq u\leq t}Y_u.
\end{displaymath}

Notice that since the exponential function is monotonically increasing,
the supremum processes of $S$ and $H$ are related via
\begin{align}\label{sup-SH}
\overline{S}_T & = \sup_{0\leq t\leq T}\left(S_0\e^{H_t}\right)
                 = S_0\e^{\sup_{0\leq t\leq T} H_t}
                 = S_0\e^{\overline{H}_T}.
\end{align}
Similarly, the infimum processes of $S$ and $H$ are related via
\begin{align*}
\underline{S}_T = S_0\e^{\underline{H}_T}.
\end{align*}

\subsection*{3.}
The aim of this work is to tackle the problem of efficient valuation for
plain vanilla options, such as \emph{European call} and \emph{put} options,
as well as for exotic path-dependent options, such as \emph{lookback}
and \emph{one-touch} options, in a unified framework. Therefore, we will
analyze and prove valuation formulas for options on an asset $S=S_0\e^{H}$
with a payoff at maturity $T$ that may depend on the whole path of $S$ up
to time $T$. These results, together with analyticity conditions on the
Wiener--Hopf factors, will be used in the companion paper
\cite{EberleinGlauPapapantoleon09} for the pricing of one-touch and lookback
options in \lev models.

The following example of a fixed strike lookback option will serve as
a guideline for our methodology; note that using \eqref{sup-SH} it
can be re-written as
\begin{align}\label{stdlookb}
 \big(\overline{S}_T - K\big)^+ = \big(S_0\e^{\overline{H}_T} - K\big)^+\,.
\end{align}

In order to incorporate both plain vanilla options and exotic options in a
single framework we separate the \emph{payoff function} from the
\emph{underlying process}, where:
\begin{enumerate}[(a)]
\item the \emph{underlying process} can be the log-asset price
      process or the supremum/infimum of the log-asset price process
      or an average of the log-asset price process. This process will always be
      denoted by $X$ (i.e. $X=H$ or $X=\overline H$ or $X=\underline H$, etc.);
\item the \emph{payoff function} is an arbitrary function
      $f:\R\rightarrow\Rp$, for example $f(x)=(\e^x-K)^+$ or
      $f(x)=1_{\{\e^x>B\}}$, for $K,B\in\Rp$.
\end{enumerate}

Clearly, we regard options as dependent on the \textit{underlying process} $X$,
i.e. on (some functional of) the logarithm of the asset price process $S$. The
main advantage is that the characteristic function of $X$ is easier to handle
than that of (some functional of) $S$; for example, for a \lev process $H=X$ it
is already known in advance.

Moreover, we consider exactly those options where we can incorporate the
path-dependence of the option payoff into the underlying process $X$. European
vanilla options are a trivial example, as there is no path-dependence; a
non-trivial, example are options on the supremum, see again
\eqref{sup-SH} and \eqref{stdlookb}. Other examples are the geometric Asian
option and forward-start options.

In addition, we will assume that the initial value of the underlying process $X$
is zero; this is the case in all natural examples in mathematical finance. The
initial value $S_0$ of the asset price process $S$ plays a particular role,
because it is convenient to consider the option price as a function of it, or
more specifically as a function of $\logs=-\log S_0$.

Hence, we express a general payoff as
\begin{align}
\Phi\big(S_0\e^{H_t},\,0\le t\le T\big) = f(X_T-\logs)\,,
\end{align}
where $f$ is a payoff function and $X$ is the underlying process, i.e.
an adapted process, possibly depending on the full history of $H$,
with
$$X_t:= \Psi(H_s,\,0\le s\le t)\quad \text{ for } \,t\in[0,T],$$
and $\Psi$ a measurable functional. Therefore, the time-$0$ price of the option
is provided by the (discounted) expected payoff, i.e.
\begin{align}\label{genericprice}
\mathbb{V}_f(X;\logs)
 = E\big[\Phi\big(S_t,\,0\le t\le T\big)\big]
 = E\big[ f(X_T-\logs)\big]\,.
\end{align}

Note that we consider `European style' options, in the sense that the holder
or writer do not have the right to exercise or terminate the option before
maturity.

\begin{remark}
In case the interest rate $r$ and the dividend yield $\delta$ are non-zero,
then the martingale condition \eqref{ch3:eq3} reads
\begin{equation*}
 (\delta - r)t + B_t + \frac{C_t}{2} + (\e^x-1-h(x))*\nu_t = 0
\end{equation*}
for all $t$, and the option price is given by
$\mathbb{V}_f(X;\logs)
 = \e^{-rT}E\big[ f(X_T-\logs)\big]\,.$
\end{remark}

\subsection*{4.}
The first result focuses on options with continuous payoff functions, such as
European plain vanilla options, but also lookback options.

Let $P_{X_T}$ denote the law, $M_{X_T}$ the moment generating function and
$\varphi_{X_T}$ the (extended) characteristic function of the random variable
$X_T$; that is
\begin{align*}
M_{X_T}(u)
 = E\big[\e^{uX_T}\big]
 = \varphi_{X_T}(-iu),
\end{align*}
for suitable $u\in\C$. For any payoff function $f$ let $g$ denote the \emph{dampened}
payoff function, defined via
\begin{align}\label{g-defin}
g(x)=\e^{-Rx}f(x)
\end{align}
for some $R\in\R$. Let $\widehat{g}$ denote the (extended) Fourier transform
of a function $g$, and $L^1_{\text{bc}}(\R)$ the space of bounded, continuous
functions in $L^1(\R)$.

In order to derive a valuation formula for an option with an
arbitrary \emph{continuous} payoff function $f$, we will impose the
following conditions.
\begin{description}
\item[(C1)] Assume that $g\in L^1_{\text{bc}}(\R)$.
\item[(C2)] Assume that $M_{X_T}(R)$ exists.
\item[(C3)] Assume that $\widehat{g}\in L^1(\R)$.
\end{description}

\begin{theorem}\label{valuation}
If the asset price process is modeled as an exponential
semimartingale process according to \eqref{ch3:eq1}--\eqref{ch3:eq3}
and conditions (C1)--(C3) are in force, then the time-0 price function
is given by
\begin{align}\label{value}
\mathbb V_f(X;\logs) =
 \frac{\e^{-R\logs}}{2\pi}
  \int_\R \e^{-iu\logs}\varphi_{X_T}(u-iR)\widehat{f}(iR-u)\ud u.
\end{align}
\end{theorem}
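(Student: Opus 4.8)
The plan is to start from the definition $\mathbb{V}_f(X;\logs)=E[f(X_T-\logs)]$, strip off the dampening factor, and represent the payoff through the Fourier inversion of $g$. Writing $f(x)=\e^{Rx}g(x)$ gives $f(X_T-\logs)=\e^{R(X_T-\logs)}g(X_T-\logs)$. Since $g$ is continuous and $\widehat g\in L^1(\R)$ by (C1) and (C3), the Fourier inversion theorem applies \emph{pointwise at every argument}, so that $g(x)=\frac{1}{2\pi}\int_\R \e^{-iux}\widehat g(u)\,\ud u$ for all $x\in\R$, and in particular at the random point $x=X_T-\logs$. This pointwise (rather than merely almost-everywhere) validity is exactly what continuity of $g$ buys us, and it is essential here because $X_T$ need not admit a Lebesgue density. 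Substituting into the expectation yields
\[
\mathbb{V}_f(X;\logs)=E\Big[\frac{\e^{R(X_T-\logs)}}{2\pi}\int_\R \e^{-iu(X_T-\logs)}\widehat g(u)\,\ud u\Big].
\]

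The heart of the argument, and the step I expect to be the main obstacle, is the interchange of expectation and integration via Fubini's theorem. Here the three conditions are tailored precisely so that the interchange is legitimate: because $|\e^{-iu(X_T-\logs)}|=1$, the modulus of the integrand is $\e^{R(X_T-\logs)}\,|\widehat g(u)|$, and therefore
\[
E\Big[\int_\R \e^{R(X_T-\logs)}\,|\widehat g(u)|\,\ud u\Big]=\e^{-R\logs}\,E\big[\e^{RX_T}\big]\,\|\widehat g\|_{L^1(\R)}=\e^{-R\logs}\,M_{X_T}(R)\,\|\widehat g\|_{L^1(\R)}.
\]
This is finite exactly because $M_{X_T}(R)$ exists by (C2) and $\widehat g\in L^1(\R)$ by (C3). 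Hence Fubini applies and the expectation may be moved inside the integral.

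It then remains to recast the result in the stated form. After the interchange one obtains $\mathbb{V}_f(X;\logs)=\frac{1}{2\pi}\int_\R \e^{-R\logs}\e^{iu\logs}\,E[\e^{(R-iu)X_T}]\,\widehat g(u)\,\ud u$. I would identify $E[\e^{(R-iu)X_T}]=M_{X_T}(R-iu)=\varphi_{X_T}(-u-iR)$ using the relation $M_{X_T}(\cdot)=\varphi_{X_T}(-i\,\cdot)$ recorded before the theorem, and express the transform of the dampened payoff through that of the payoff itself via $\widehat g(u)=\widehat f(u+iR)$, which is immediate from $g(x)=\e^{-Rx}f(x)$ together with the (forced) convention $\widehat f(z)=\int_\R \e^{izx}f(x)\,\ud x$. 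Finally the substitution $u\mapsto -u$ turns the integral into $\frac{\e^{-R\logs}}{2\pi}\int_\R \e^{-iu\logs}\varphi_{X_T}(u-iR)\widehat f(iR-u)\,\ud u$, which is precisely \eqref{value}. These closing manipulations are purely computational; the only genuine analytic content is the pointwise Fourier inversion and the Fubini bound above.
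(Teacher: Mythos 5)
Your proposal is correct and follows essentially the same route as the paper: pointwise Fourier inversion of the dampened payoff $g$ (valid everywhere because $g\in L^1_{\text{bc}}(\R)$ and $\widehat g\in L^1(\R)$, which is exactly what is needed since $P_{X_T}$ may lack a density), followed by Fubini justified by the bound $M_{X_T}(R)\,\|\widehat g\|_{L^1(\R)}<\infty$, and the identification $\widehat g(u)=\widehat f(u+iR)$. The only cosmetic difference is that you work with expectations directly while the paper rewrites everything in terms of the image measure $P_{X_T}$.
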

\begin{proof}
Using \eqref{genericprice} and \eqref{g-defin} we have
\begin{align}\label{ch3:eq4}
\V_f(X;\logs)
    = \int_\Omega f(X_T-\logs)\ud P
    = \e^{-R\logs}\int_\R \e^{Rx}g(x-\logs)P_{X_T}(\dx).
\end{align}
By assumption (C1), $g\in\la$ and the Fourier transform of $g$,
\begin{align*}
 \widehat{g}(u) = \int_\R \e^{iux} g(x)\dx,
\end{align*}
is well defined for every $u\in\R$ and is also continuous and
bounded. Additionally, using assumption (C3) we immediately
have that $\widehat{g}\in\labc$.  Therefore, using the Inversion
Theorem (cf. \citeNP[Theorem 3.4.4]{Deitmar04}), $\widehat{g}$
can be inverted and $g$ can be represented, for \emph{all} $x\in\R$, as
\begin{align}\label{ch3:eq7}
 g(x) = \frac1{2\pi} \int_\R \e^{-ixu}\widehat{g}(u)\ud u.
\end{align}

Now, returning to the valuation problem \eqref{ch3:eq4} we get that
\begin{align}\label{ch3:eq8}
\V_f(X;\logs)
   &= \e^{-R\logs}\int_\R \e^{Rx}
      \Bigg(\frac1{2\pi}\int_\R \e^{-i(x-\logs)u}\widehat{g}(u)\ud u\Bigg)P_{X_T}(\dx)\nonumber\\
   &= \frac{\e^{-R\logs}}{2\pi}\int_\R \e^{iu\logs}
      \Bigg( \int_\R \e^{i(-u-iR)x}P_{X_T}(\dx)\Bigg)\widehat{g}(u)\ud u\nonumber\\
   &= \frac{\e^{-R\logs}}{2\pi}\int_\R \e^{iu\logs}\varphi_{X_T}(-u-iR)\widehat{f}(u+iR)\ud u,
\end{align}
where for the second equality we have applied Fubini's theorem;
moreover, for the last equality we have
\begin{align*}
\widehat{g}(u) = \int_\R \e^{iux}\e^{-Rx}f(x)\dx
                 = \widehat{f}(u+iR).
\end{align*}

Finally, the application of Fubini's theorem is justified since
\begin{align*}
\int_\R\int_\R \e^{Rx}|\e^{-iu(x-\logs)}||\widehat{g}(u)|\ud u P_{X_T}(\dx)
  &\le \int_\R\e^{Rx}\bigg(\int_\R |\widehat{g}(u)|\ud u \bigg)P_{X_T}(\dx)\\
  &\le K M_{X_T}(R) < \infty,
\end{align*}
where we have used again that $\widehat{g}\in L^1(\R)$, and the finiteness
of $M_{X_T}(R)$ is given by Assumption (C2).
\end{proof}

\begin{remark}
We could also replace assumptions (C1) and (C3) with the following conditions
\begin{center}
(C1$'$): $g\in L^1(\R)\quad$ and $\quad$
(C3$'$): $\widehat{\e^{Rx}P_{X_T}}\in L^1(\R)$.
\end{center}
Condition (C3$'$) yields that $\e^{Rx}P_{X_T}$ possesses a continuous
bounded Lebesgue density, say $\rho$; cf. \citeN[Theorem 8.39]{Breiman68}.
Then, we can identify $\rho$, instead of $g$, with the inverse of its Fourier
transform and the proof goes through with the obvious modifications. This
statement is almost identical to Theorem 3.2 in \citeN{Raible00}.
\end{remark}

\begin{remark}[Numerical evaluation]
The option price represented as an integral of the form \eqref{value} can be
evaluated numerically very fast. The following simple observation can speed
up the computation of this expression even further: notice that for a fixed
maturity $T$, the characteristic function -- which is the computationally
expensive part -- should only be evaluated \emph{once} for \emph{all}
different strikes or initial values. The gain in computational time will be
significant when considering models where the characteristic function is not
known in closed form; e.g. in affine models where one might need to solve a
Riccati equation to obtain the characteristic function. This observation has
been termed `caching' by some authors (cf. \citeNP{Kilin07})
\end{remark}

\subsection*{5.}
Apart from (C3), the prerequisites of Theorem \ref{valuation} are quite easy
to check in specific cases. In general, it is also an interesting question
to know when the Fourier transform of an integrable function is integrable.
The problem is well understood for smooth ($C^2$ or $C^\infty$) functions, see
e.g. \citeN{Deitmar04}, but the functions we are dealing with are typically
not smooth. Hence, we will provide below an easy-to-check condition for a
non-smooth function to have an integrable Fourier transform.

Let us consider the Sobolev space $H^1(\R)$, with
\begin{align*}
 H^1(\R) = \Big\{g\in L^2(\R) \;\Big|\; \partial g \text{ exists and } \partial g\in L^2(\R)\Big\},
\end{align*}
where $\partial g$ denotes the \emph{weak} derivative of a function
$g$; see e.g. \citeANP{Sauvigny06} \citeyear{Sauvigny06}. Let $g\in H^1(\R)$, then from
Proposition 5.2.1 in \citeN{Zimmer90} we get that
\begin{align}\label{fou-der}
 \widehat{\partial g}(u) = -iu\widehat{g}(u)
\end{align}
and $\widehat{g},\widehat{\partial g}\in L^2(\R)$.

\begin{lemma}\label{sobo}
Let $g\in H^1(\R)$, then $\widehat{g}\in L^1(\R)$.
\end{lemma}
\begin{proof}
Using the above results, we have that
\begin{align}\label{fou-fdf}
 \infty
  > \int_\R \Big(\big|\widehat{g}(u)\big|^2 + \big|\widehat{\partial g}(u)\big|^2\Big)\ud u
  = \int_\R \big|\widehat{g}(u)\big|^2\big(1+|u|^2\big)\ud u.
\end{align}
Now, by the H\"older inequality, using $(1+|u|)^2\leq3(1+|u|^2)$
and \eqref{fou-fdf}, we get that
\begin{align*}
\int_\R \big|\widehat{g}(u)\big|\ud u
 &=   \int_\R \big|\widehat{g}(u)\big|\frac{1+|u|}{1+|u|}\ud u\\
 &\le \bigg(\int_\R \big|\widehat{g}(u)\big|^2(1+|u|)^2\ud u\bigg)^{\frac12}
      \bigg(\int_\R \frac{1}{(1+|u|)^2}\ud u\bigg)^{\frac12}
  < \infty
\end{align*}
and the result is proved.
\end{proof}

\begin{remark}
A similar statement can be proved for functions in the Sobolev-Slobodeckij
space $H^s(\R)$, for $s>\frac12$.
\end{remark}

\subsection*{6.}
Next, we deal with the valuation formula for options whose payoff function can
be \textit{discontinuous}, while at the same time the measure $P_{X_T}$ does
\textit{not} necessarily possess a Lebesgue density. Such a situation arises
typically when pricing one-touch options in purely discontinuous \lev models.
Hence, we need to impose different conditions, and we derive the valuation formula
as a pointwise limit by generalizing the proof of Theorem 3.2 in \citeN{Raible00}.
A similar result (Theorem 1 in \shortciteNP{DufresneGarridoMorales09}) has been
pointed out to us by one of the referees.

In this and the following sections we will make use of the following
notation; we define the function $\bg$ and the measure $\varrho$ as
follows
\begin{align*}
\bg(x):=g(-x)\quad \text{ and } \quad \varrho(\dx):=\e^{Rx}P_{X_T}(\dx).
\end{align*}
Moreover $\varrho(\R)=\int\varrho(\dx)$, while $\bg*\varrho$ denotes the
convolution of the function $\bg$ with the measure $\varrho$. In this case
we will use the following assumptions.

\begin{description}
\item[(D1)] Assume that $g\in L^1(\R)$.
\item[(D2)] Assume that $M_{X_T}(R)$ exists
            ($\Longleftrightarrow \varrho(\R)<\infty$).
\end{description}

\begin{theorem}\label{valuation-dc}
Let the asset price process be modeled as an exponential
semimartingale process according to \eqref{ch3:eq1}--\eqref{ch3:eq3}
and conditions (D1)--(D2) be in force. The time-0 price function is
given by
\begin{align}\label{value-dc}
\mathbb V_f(X;\logs) =
  \lim_{A\rightarrow\infty}\frac{\e^{-R\logs}}{2\pi}
  \int_{-A}^A \e^{-iu\logs}\varphi_{X_T}(u-iR)\widehat{f}(iR-u)\ud u,
\end{align}
at the point $\logs\in\R$, if $\mathbb V_f(X;\cdot)$ is of bounded
variation in a neighborhood of $\logs$, and $\mathbb V_f(X;\cdot)$
is continuous at $\logs$.
\end{theorem}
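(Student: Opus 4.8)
The plan is to collapse the pricing integral into a single convolution and then recover the price by a \emph{pointwise} Fourier inversion that is valid for functions of bounded variation, in place of the absolute $L^1$-inversion used in Theorem~\ref{valuation}. I would start from the representation \eqref{ch3:eq4}, which with $\varrho(\dx)=\e^{Rx}P_{X_T}(\dx)$ reads
\[
\V_f(X;\logs)=\e^{-R\logs}\int_\R g(x-\logs)\,\varrho(\dx).
\]
Writing $g(x-\logs)=\bg(\logs-x)$ with $\bg(x)=g(-x)$, the integral is exactly the convolution of $\bg$ with the measure $\varrho$, so that
\[
\V_f(X;\logs)=\e^{-R\logs}\,(\bg*\varrho)(\logs).
\]
Set $h:=\bg*\varrho$. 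By (D1) we have $\bg\in\la$ (reflection preserves the $L^1$-norm), and by (D2) the measure $\varrho$ is finite; hence $h\in\la$ with $\|h\|_{1}\le\|\bg\|_{1}\,\varrho(\R)<\infty$.

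Next I would compute $\widehat{h}$ and match it to the integrand in \eqref{value-dc}. The Fourier transform of the convolution of an $L^1$-function with a finite measure factorizes, $\widehat{h}=\widehat{\bg}\,\widehat{\varrho}$, the interchange of integrals being justified by $\bg\in\la$ and $\varrho(\R)<\infty$. A change of variables gives $\widehat{\bg}(u)=\widehat{g}(-u)=\widehat{f}(iR-u)$, using $\widehat{g}(u)=\widehat{f}(u+iR)$ as in the proof of Theorem~\ref{valuation}, while
\[
\widehat{\varrho}(u)=\int_\R \e^{(R+iu)x}P_{X_T}(\dx)=M_{X_T}(R+iu)=\varphi_{X_T}(u-iR).
\]
Therefore $\widehat{h}(u)=\varphi_{X_T}(u-iR)\,\widehat{f}(iR-u)$, and the truncated integral in \eqref{value-dc} equals $\e^{-R\logs}\frac{1}{2\pi}\int_{-A}^{A}\e^{-iu\logs}\widehat{h}(u)\,\ud u$.

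For the final step, since I do \emph{not} assume $\widehat{h}\in\la$, the inversion formula \eqref{ch3:eq7} is unavailable and I would instead invoke the real-line Fourier inversion theorem for functions of bounded variation (a Dirichlet--Jordan type result, which is exactly the mechanism behind Theorem~3.2 of \citeN{Raible00}): for $h\in\la$ that is of bounded variation in a neighborhood of a point $\logs$,
\[
\lim_{A\to\infty}\frac{1}{2\pi}\int_{-A}^{A}\e^{-iu\logs}\widehat{h}(u)\,\ud u=\tfrac12\big(h(\logs^{+})+h(\logs^{-})\big),
\]
the right-hand side reducing to $h(\logs)$ wherever $h$ is continuous. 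To apply this I would transfer the hypotheses: because $h(\logs)=\e^{R\logs}\V_f(X;\logs)$ and $\logs\mapsto\e^{R\logs}$ is smooth and nowhere zero, $h$ is of bounded variation near $\logs$, respectively continuous at $\logs$, exactly when $\V_f(X;\cdot)$ is. Multiplying the displayed limit by $\e^{-R\logs}$ then yields \eqref{value-dc}.

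The main obstacle is the inversion step itself: the defining feature of the discontinuous case is that $\widehat{h}$ need not be integrable, so absolute convergence of the inversion integral must be replaced by the symmetric principal-value limit, whose convergence is secured solely through the local bounded-variation hypothesis via the Dirichlet--Jordan argument. Everything else — the convolution identity, the factorization of $\widehat{h}$, and the passage of the regularity assumptions through the factor $\e^{-R\logs}$ — is routine once the correct pointwise inversion theorem is in hand.
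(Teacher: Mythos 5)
Your proposal is correct and follows essentially the same route as the paper's proof: the convolution representation $\V_f(X;\logs)=\e^{-R\logs}\,\bg*\varrho(\logs)$, Young's inequality to get $\bg*\varrho\in L^1(\R)$, factorization $\widehat{\bg*\varrho}=\widehat{\bg}\cdot\widehat{\varrho}$ with $\widehat{\bg}(u)=\widehat{f}(iR-u)$ and $\widehat{\varrho}(u)=\varphi_{X_T}(u-iR)$, and a Dirichlet--Jordan type pointwise inversion (the paper cites Satz 4.2.1 in Doetsch) whose bounded-variation and continuity hypotheses are transferred from $\V_f(X;\cdot)$ through the nonvanishing smooth factor $\e^{R\logs}$. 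No gaps.
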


\begin{remark}
In Section \ref{ch3:payoffs} we will relate the conditions on the
valuation function $\mathbb V_f$ to properties of the measure $P_{X_T}$
for specific (dampened) payoff functions $g$. These properties are
easily checkable -- and typically satisfied -- in many models.
\end{remark}

\begin{proof}
Starting from \eqref{ch3:eq4}, we can represent the option price function
as a convolution of $\bg$ and $\varrho$ as follows
\begin{align}\label{value-pw}
\V_f(X;\logs) &= \e^{-R\logs}\int_\R \e^{Rx}g(x-\logs)P_{X_T}(\dx)
               = \e^{-R\logs}\int_\R \bg(\logs-x)\varrho(\dx) \nonumber\\
              &= \e^{-R\logs}\bg*\varrho(\logs).
\end{align}
Using that $g\in L^1(\R)$, hence also $\bg\in L^1(\R)$, and $\varrho(\R)<\infty$
we get that $\bg*\varrho\in L^1(\R)$, since
\begin{align}\label{L1-conv}
\Vert\bg*\varrho\Vert_{L^1(\R)}
 &\le \varrho(\R)\,\Vert\bg\Vert_{\la}<\infty;
\end{align}
compare with Young's inequality, cf. \citeN[IV.1.6]{Katznelson04}.
Therefore, the Fourier transform of the convolution is well defined
and we can deduce that, for all $u\in\R$,
\begin{align*}
 \widehat{\bg*\varrho}(u)=\widehat{\bg}(u)\cdot \widehat{\varrho}(u);
\end{align*}
compare with Theorem 2.1.1 in \citeN{Bochner55}.

By \eqref{L1-conv} we can apply the inversion theorem for the Fourier
transform, cf. Satz 4.2.1 in \citeN{Doetsch50}, and get
\begin{align}\label{ug-for}
\frac12 \big(\bg*\varrho(\logs+)+\bg*\varrho(\logs-)\big)
 &= \frac{1}{2\pi} \lim_{A\rightarrow\infty}
    \int_{-A}^A \e^{-iu\logs}\widehat{\varrho}(u)\widehat{\bg}(u)\du,
\end{align}
if there exists a neighborhood of $\logs$ where
$\logs\mapsto\bg*\varrho(\logs)$ is of bounded variation.

We proceed as follows: first we show that the function
$\logs\mapsto\bg*\varrho(\logs)$ has bounded variation; then we show
that this map is also continuous, which yields that the left hand side
of \eqref{ug-for} equals $\bg*\varrho(\logs)$.

For that purpose, we re-write \eqref{value-pw} as
\begin{align*}
\bg*\varrho(\logs) = \e^{R\logs}\,\mathbb V_f(X;\logs);
\end{align*}
then, $\bg*\varrho$ is of bounded variation on a compact interval
$[a,b]$ if and only if $\mathbb V_f(X;\cdot)\in BV([a,b])$;
this holds because the map $\logs\mapsto\e^{R\logs}$ is of bounded
variation on any bounded interval on $\R$, and the fact that the
space $BV([a,b])$ forms an algebra; cf. Satz 91.3 in \citeN{Heuser01}.
Moreover, $\logs$ is a continuity point of $\bg*\varrho$ if and only
if $\mathbb V_f(X;\cdot)$ is continuous at $\logs$.

In addition, we have that
\begin{align}\label{dc-val-3}
\widehat{\bg}(u)
 &= \int_\R \e^{-iux}\e^{-Rx}f(x)\dx
  = \widehat{f}(iR-u)
\end{align}
and
\begin{align}\label{dc-val-4}
\widehat{\varrho}(u)
  = \int_\R \e^{iux}\e^{Rx}P_{X_T}(\dx)
  = \varphi_{X_T}(u-iR).
\end{align}
Hence, \eqref{ug-for} together with \eqref{dc-val-3}, \eqref{dc-val-4}
and the considerations regarding the continuity and bounded variation
properties of the value function yield the required result.
\end{proof}

\section{Option valuation: multiple assets}
\label{Rdval}

\subsection*{1.}
We would like to establish valuation formulas for options that depend
on several assets or on multiple functionals of one asset.
Typical examples of options on several assets are basket options and
options on the minimum or maximum of several assets, with payoff
\begin{displaymath}
  (S_T^1\wedge\cdots\wedge S_T^d-K)^+,
\end{displaymath}
where $x\wedge y=\min\{x,y\}$. Typical examples of options on functionals
of a single asset are \emph{barrier} options, with payoff
\begin{displaymath}
  (S_T-K)^+1_{\{\overline S_T>B\}},
\end{displaymath}
and \emph{slide-in} or \emph{corridor} options, with payoff
\begin{displaymath}
  (S_T-K)^+ \sum_{i=1}^N1_{\{L<S_{T_i}<H\}},
\end{displaymath}
at maturity $T$, where $0=T_0<T_1<\dots<T_N=T$.

In the previous section we proved that the valuation formulas for a single
underlying is still valid -- at least as a pointwise limit, under reasonable
additional assumptions -- even if the underlying distribution does not possess
a Lebesgue density and the payoff is discontinuous.

In the present section we will generalize the valuation formulas to the case
of several underlyings. Once again, if either the joint distribution possesses
a Lebesgue density or the payoff function is continuous, the formula is valid
as a Lebesgue integral. In case both assumptions fail, we will encounter situations
that are apparently of harmless nature, but where the pointwise convergence
will fail. In this case we will establish the valuation formulas as an $L^2$-limit;
however, with respect to numerical evaluation, a stronger notion of convergence
would be preferable.

Analogously to the single asset case we assume that the asset prices evolve
as exponential semimartingales. Let the driving process be an $\R^d$-valued
semimartingale $H=(H^1,\dots,H^d)^\top$ and $S=(S^1,\dots,S^d)^\top$ be the
vector of asset price processes; then each component $S^\icc$ of $S$ is
modeled as an exponential semimartingale, i.e.
\begin{align}\label{asset-Rd}
S^\icc_{t}=S^\icc_0\exp H^\icc_{t}, \qquad \ott, \;1\le \icc\le d,
\end{align}
where $H^\icc$ is an $\R$-valued
semimartingale with canonical representation
\begin{equation}\label{Hi-canon}
  H^\icc = H^\icc_0 + B^\icc + H^{\icc,c} + h^\icc(x)*(\mu-\nu) + (x^\icc-h^\icc(x))*\mu,
\end{equation}
with $h^\icc(x)=\e_\icc^\top h(x)$. The martingale condition can be given
as in eq. (3.3) in \citeN{EberleinPapapantoleonShiryaev08}.

\subsection*{2.}
In the sequel, we will price options with payoff $f(X_T-\logs)$
at maturity $T$, where $X_T$ is an $\F_T$-measurable $\R^d$-valued
random variable, possibly dependent of the history of the $d$
driving processes, i.e.
\begin{align*}
 X_T = \Psi\big(H_t,\,0\le t\le T\big),
\end{align*}
where $\Psi$ is an $\R^d$-valued measurable functional. Further $f$ is a
measurable function $f:\R^d\rightarrow\R_+$, and
$\logs=(\logs^1,\dots,\logs^d)\in\R^d$ with $\logs^\icc=-\log S_0^\icc$.

Analogously to the single asset case, we use the dampened payoff function
\begin{align*}
 g(x):= \e^{-\langle R,x\rangle}f(x) \quad\text{ for } x\in\R^d,
\end{align*}
and denote by $\varrho$ the measure defined by
\begin{align*}
 \varrho(\dx):= \e^{\langle R,x\rangle} P_{X_T}(\dx),
\end{align*}
where $R\in\R^d$ serves as a dampening coefficient. Here $\langle\cdot,\cdot\rangle$
denotes the Euclidian scalar product in $\R^d$. The scalar product is extended to
$\C^d$ as follows: for $u,v\in\C^d$, set $\langle u,v\rangle = \sum_iu_iv_i$, i.e.
we do not use the Hermitian inner product. Moreover, $M_{X_T}$ and
$\varphi_{X_T}$ denote the moment generating, resp. characteristic, function
of the random vector $X_T$.

To establish our results we will make use of the following assumptions.

\begin{description}
\item[(A1)] Assume that $g\in L^1(\R^d)$.
\item[(A2)] Assume that $M_{X_T}(R)$ exists.
\item[(A3)] Assume that $\widehat\varrho\in L^1(\R^d)$.
\end{description}

\begin{remark}
We can also replace Assumptions (A1) and (A3) with the following
assumption
\begin{description}
\item[(A1$'$)] Assume that $g\in L_{\text{bc}}^1(\R^d)$ and $\widehat{g}\in L^1(\R^d)$;
\end{description}
this shows again the interplay between the continuity properties of the payoff
function and the underlying distribution.
\end{remark}

\begin{theorem}\label{valuation-Rd}
If the asset price processes are modeled as exponential semimartingale
processes according to \eqref{asset-Rd}--\eqref{Hi-canon} and conditions
(A1)--(A3) are in force, then the time-$0$ price function is given by
\begin{align}\label{value-Rd}
\mathbb V_{f}(X;\logs)
 &= \frac{\e^{-\langle R,\logs\rangle}}{(2\pi)^d}
    \int_{\R^d} \e^{-i\langle u,\logs\rangle} M_{X_T}(R+iu) \widehat{f}(iR-u) \ud u.
\end{align}
\end{theorem}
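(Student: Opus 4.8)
The plan is to follow the one-dimensional argument of Theorem~\ref{valuation} in the variant sketched in the Remark after it (conditions (C1$'$) and (C3$'$)), since (A1)--(A3) are precisely the $\R^d$-analogues of (C1$'$), (C2) and (C3$'$); the idea is therefore to invert the \emph{measure} $\varrho$ rather than the function $g$. First I would recast the price in a convolution form. Starting from \eqref{genericprice} and substituting $f(x)=\e^{\langle R,x\rangle}g(x)$, one has
\begin{align*}
\mathbb V_{f}(X;\logs)
  &= \int_{\R^d} f(x-\logs)\,P_{X_T}(\dx) \\
  &= \e^{-\langle R,\logs\rangle}\int_{\R^d} g(x-\logs)\,\varrho(\dx),
\end{align*}
where I have used $\varrho(\dx)=\e^{\langle R,x\rangle}P_{X_T}(\dx)$ together with $\langle R,x-\logs\rangle=\langle R,x\rangle-\langle R,\logs\rangle$.

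Next I would exploit (A2) and (A3): by (A2) the measure $\varrho$ is finite, and by (A3) its Fourier transform $\widehat\varrho$ is integrable, so the multidimensional Fourier inversion theorem (the $\R^d$-analogue of the result cited in the Remark after Theorem~\ref{valuation}, cf. \citeN[Theorem 8.39]{Breiman68}) shows that $\varrho$ admits a continuous bounded Lebesgue density
\begin{align*}
 \rho(x)=\frac{1}{(2\pi)^d}\int_{\R^d}\e^{-i\langle u,x\rangle}\widehat\varrho(u)\,\du.
\end{align*}
Inserting this representation into the integral above and exchanging the order of integration by Fubini's theorem, the price becomes
\begin{align*}
\mathbb V_{f}(X;\logs)
  = \frac{\e^{-\langle R,\logs\rangle}}{(2\pi)^d}\int_{\R^d}\widehat\varrho(u)
    \bigg(\int_{\R^d} g(x-\logs)\,\e^{-i\langle u,x\rangle}\dx\bigg)\du.
\end{align*}
The substitution $y=x-\logs$ evaluates the inner integral as $\e^{-i\langle u,\logs\rangle}\widehat g(-u)$, and it then remains to identify the two transforms. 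A direct computation gives $\widehat\varrho(u)=M_{X_T}(R+iu)$ and $\widehat g(-u)=\widehat f(iR-u)$; here (A2) is what guarantees that $M_{X_T}(R+iu)$ is finite, so that the integrand is well defined. Collecting these identities yields exactly \eqref{value-Rd}.

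The only genuinely technical point is the justification of Fubini's theorem, and I expect this to be the easiest part rather than a real obstacle, since the modulus of the integrand factorizes:
\begin{align*}
\int_{\R^d}\int_{\R^d}\big|g(x-\logs)\big|\,\big|\widehat\varrho(u)\big|\,\du\,\dx
  = \Vert g\Vert_{\lad}\,\Vert\widehat\varrho\Vert_{\lad}<\infty
\end{align*}
by (A1) and (A3), using translation invariance of the Lebesgue measure. The conceptual point to keep in mind is that, in contrast to the discontinuous single-asset situation of Theorem~\ref{valuation-dc}, assumption (A3) secures a density for $\varrho$ and thus keeps the computation in the `regular' regime, where the formula holds as a genuine Lebesgue integral and no limiting procedure is needed. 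Finally, if one prefers to work under the alternative hypothesis (A1$'$), the symmetric route of inverting $g$ directly (as in the proof of Theorem~\ref{valuation}) would give the same conclusion.
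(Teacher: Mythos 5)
Your proof is correct, and all of its ingredients appear in the paper, but you assemble them along a genuinely different route. The paper first shows $\bg*\varrho\in L^1(\R^d)$, invokes the convolution theorem to get $\widehat{\bg*\varrho}=\widehat{\bg}\cdot\widehat{\varrho}$, observes that this product is integrable because $\widehat{\varrho}\in L^1(\R^d)$ and $\widehat{\bg}\in L^\infty(\R^d)$, and applies Fourier inversion to $\bg*\varrho$ itself; this only yields \eqref{value-Rd} for \emph{almost every} $\logs$, so the paper must then run a separate continuity argument (using that (A3) forces $\varrho$ to have a bounded continuous density $\rho$, plus dominated convergence) to upgrade the identity to all $\logs\in\R^d$. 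You instead invert the \emph{measure} $\varrho$ directly --- exactly the multidimensional analogue of the (C1$'$)/(C3$'$) variant in the Remark after Theorem~\ref{valuation} --- substitute the resulting density representation into the convolution, and apply Fubini, whose hypothesis factorizes as $\Vert g\Vert_{L^1(\R^d)}\Vert\widehat\varrho\Vert_{L^1(\R^d)}<\infty$. Your identifications $\widehat\varrho(u)=M_{X_T}(R+iu)$ and $\widehat g(-u)=\widehat f(iR-u)$ are correct under the paper's conventions. What each approach buys: yours delivers the pointwise identity at every $\logs$ in one stroke, with no a.e.-to-everywhere step, and is arguably the more economical argument; the paper's convolution-theorem formulation is more modular and deliberately mirrors the structure it reuses for the $L^2$-limit in Theorem~\ref{L2-valuation}, where no density for $\varrho$ is available and your route would not apply.
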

\begin{proof}
Similarly to the one-dimensional case we have that
\begin{align}\label{value-map}
\V_f(X;\logs) = \e^{-\langle R,\logs\rangle} \bar{g}*\varrho(\logs).
\end{align}
Since $g\in L^1(\R^d)$ and $\varrho(\R^d)<\infty$, we get that
$\bar{g}*\varrho\in L^1(\R^d)$; therefore
$\widehat{\bg*\varrho}(u)=\widehat{\bg}(u)\cdot\widehat{\varrho}(u)$ for all
$u\in\R^d$. By assumption we know that $\widehat{\varrho}\in L^1(\R^d)$; moreover
$\widehat{\bg}\in L^\infty(\R^d)$ since
$|\widehat{\bg}|\le\Vert g\Vert_{L^1(\R^d)}<\infty$.
These considerations yield that $\widehat{\bg*\varrho}\in L^1(\R^d)$, again by
using Young's inequality.

Hence, applying the formula for the Fourier inversion, cf. Corollary
1.21 in \citeN{SteinWeiss71}, we conclude that
\begin{align*}
\V_f(X;\logs)
 &= \frac{\e^{-\langle R,\logs\rangle}}{(2\pi)^d}
    \int_{\R^d}\e^{-i\langle u,\logs\rangle}\widehat{\bg}(u)\widehat{\varrho}(u)\du \nonumber\\
 &= \frac{\e^{-\langle R,\logs\rangle}}{(2\pi)^d}
    \int_{\R^d} \e^{-i\langle u,\logs\rangle} M_{X_T}(R+iu) \widehat{f}(iR-u) \ud u,
\end{align*}
for a.e. $\logs\in\R^d$.

Moreover, if $\logs\mapsto\V_f(X;\logs)$ is continuous, then the
equality holds pointwise for \textit{all} $\logs\in\R^d$. The mapping
\eqref{value-map} is continuous if the mapping $\logs\mapsto\bg*\varrho(\logs)$
is continuous. Using Assumption (A3) we have that $\varrho$ possesses
a bounded continuous Lebesgue density $\rho\in L^1(\R^d)$; cf. Proposition
2.5 (xii) in Sato \citeyear{Sato99}. Then $\bg*\varrho=\bg*\rho$ and
\begin{align}\label{Rd-cont}
\lim_{|x|\rightarrow0} \bg*\varrho(\logs+x)
 &= \lim_{|x|\rightarrow0}\int \bg(\logs+x-z)\rho(z)\dz\nonumber\\
 &= \int\lim_{|x|\rightarrow0} \bg(\logs+y)\rho(x-y)\dy
  = \bg*\varrho(\logs)
\end{align}
yielding the continuity of the map. Note that we have used the continuity of
$\rho$; additionally, we can interchange integration and limit using the dominated
convergence theorem, with majorant $\bg(\cdot)\max_x\rho(x)$.
\end{proof}

\begin{remark}
The proof using Assumption (A1$'$) follows analogously, with the obvious
modifications for \eqref{Rd-cont}.
\end{remark}

\subsection*{3.}
Next, we consider the valuation of options on several assets when the
payoff function is \emph{discontinuous} and the driving process does
\textit{not} necessarily possess a Lebesgue density.

The main difference to the analogous situation in dimension one is that
the pointwise convergence of capped Fourier integrals -- as is the case
in Satz 4.2.1 in \citeN{Doetsch50} -- cannot be generalized to the
multidimensional case. M. Pinsky gives the following astonishing example
to illustrate this fact, see section 4.1 in \citeN{Pinsky93}; let $f$ be
the indicator function of the unit ball in $\R^3$, then
\begin{align}\label{pin-couex}
\frac{1}{(2\pi)^{3}}\int_{|x|\le A}\e^{-i\langle u,x\rangle}\widehat{f}(x)\dx\Big|_{u=0}
 = 1 - \frac{2}{\pi} \sin(A) + o(1),
\end{align}
for $A\uparrow \infty$. Extrapolating the convergence results from the
one-dimensional case to $\R^3$, we would expect pointwise convergence
of the spherical sum to the indicator function, at least in the interior
of the ball; on the contrary, the right hand side of \eqref{pin-couex}
is even \textit{divergent}.

As a consequence, we only derive an $L^2$-limit for the valuation function.

The setting is similar to the previous sections, and we need to impose the
following conditions.
\begin{description}
\item[(G1)] Assume that $g\in L^1(\R^d)\cap L^2(\R^d)$.
\item[(G2)] Assume that $M_{X_T}(R)$ exists.
\end{description}

\begin{theorem}\label{L2-valuation}
If the asset price process is modeled as an exponential semimartingale
process according to \eqref{asset-Rd}--\eqref{Hi-canon} and conditions
(G1)--(G2) are in force, then the time-$0$ price function satisfies
\begin{align}\label{L2-value}
\mathbb V_f(X;\cdot) =
 \frac{\e^{-\langle R,\cdot\rangle}}{(2\pi)^d}
 \mathop{L^2\hbox{-}\lim}_{A\rightarrow\infty}
 \int_{[-A,A]^d}\e^{-i\langle u,\cdot\rangle}\varphi_{X_T}(u-iR)\widehat{f}(iR-u)\ud u.
\end{align}
\end{theorem}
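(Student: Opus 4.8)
The plan is to reduce the statement to the $L^2$-theory of the Fourier transform, in exactly the same way the one-dimensional Theorem~\ref{valuation-dc} was reduced to the pointwise $L^1$-inversion of Doetsch. As in \eqref{value-map}, I would first record the identity $\V_f(X;\logs)=\e^{-\langle R,\logs\rangle}\,\bg*\varrho(\logs)$, so that it suffices to identify the function $\bg*\varrho$ with the $L^2$-limit of the truncated integrals appearing in \eqref{L2-value}; the prefactor $\e^{-\langle R,\cdot\rangle}/(2\pi)^d$ is then applied \emph{pointwise} to the already-identified limit function, and in particular the unbounded factor $\e^{-\langle R,\cdot\rangle}$ never interacts with the $L^2$-convergence itself.

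First I would verify that $\bg*\varrho\in L^1(\R^d)\cap L^2(\R^d)$. The $L^1$-bound is the multidimensional analogue of \eqref{L1-conv}, namely $\Vert\bg*\varrho\Vert_{L^1(\R^d)}\le\varrho(\R^d)\,\Vert\bg\Vert_{L^1(\R^d)}<\infty$, which holds by (G1)--(G2) and Young's inequality. The $L^2$-bound follows from the same Young-type inequality for the convolution of an $L^2$-function with a finite measure, $\Vert\bg*\varrho\Vert_{L^2(\R^d)}\le\varrho(\R^d)\,\Vert\bg\Vert_{L^2(\R^d)}<\infty$, using $\Vert\bg\Vert_{L^2(\R^d)}=\Vert g\Vert_{L^2(\R^d)}$ and the $L^2$-part of (G1). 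Next I would compute the transforms $\widehat{\bg}(u)=\widehat f(iR-u)$ and $\widehat{\varrho}(u)=\varphi_{X_T}(u-iR)$ precisely as in \eqref{dc-val-3}--\eqref{dc-val-4}, and invoke the convolution theorem (as cited for the one-dimensional proof) to obtain $\widehat{\bg*\varrho}=\widehat{\bg}\cdot\widehat{\varrho}$ as the $L^1$-Fourier transform.

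Because $\bg*\varrho\in L^1\cap L^2$, its $L^1$- and $L^2$-Fourier transforms agree almost everywhere, so the $L^2$-Fourier transform of $\bg*\varrho$ is the product $\widehat{\bg}\,\widehat{\varrho}$. I would note for consistency that this product is automatically square-integrable: $\widehat{\bg}\in L^2(\R^d)$ by Plancherel (from $\bg\in L^2$), while $\widehat{\varrho}\in L^\infty(\R^d)$ with $|\widehat{\varrho}|\le\varrho(\R^d)$, so $\widehat{\bg}\,\widehat{\varrho}\in L^2(\R^d)$. I would then apply the $L^2$-inversion (Plancherel) theorem, cf. \citeN{SteinWeiss71}: since the cubes $[-A,A]^d$ exhaust $\R^d$ as $A\to\infty$, the truncated integrals $\tfrac1{(2\pi)^d}\int_{[-A,A]^d}\e^{-i\langle u,\cdot\rangle}\widehat{\bg}(u)\widehat{\varrho}(u)\,\du$ converge in $L^2$ to $\bg*\varrho$. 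Substituting the expressions for $\widehat{\bg}$ and $\widehat{\varrho}$ and multiplying by $\e^{-\langle R,\cdot\rangle}$ via \eqref{value-map} yields \eqref{L2-value}.

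The conceptual point to keep in mind -- and the reason the conclusion is only an $L^2$-statement -- is exactly Pinsky's phenomenon recorded in \eqref{pin-couex}: the pointwise capped-integral inversion of Doetsch used in dimension one cannot be carried over to $\R^d$. The compensating virtue is that $L^2$-inversion over the exhausting family of cubes holds \emph{unconditionally} once $\bg*\varrho\in L^2$, so there is no genuine analytic obstacle beyond establishing the square-integrability of the convolution. The only points requiring care are the consistency of the $L^1$- and $L^2$-Fourier transforms on $L^1\cap L^2$ (which legitimizes inverting the explicit product $\widehat{\bg}\,\widehat{\varrho}$ rather than an abstract $L^2$-limit), and the observation that the unbounded damping factor $\e^{-\langle R,\cdot\rangle}$ enters only after the $L^2$-limit has been identified with the concrete function $\bg*\varrho$.
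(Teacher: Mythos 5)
Your proposal is correct and follows essentially the same route as the paper's own proof: the convolution representation $\V_f(X;\cdot)=\e^{-\langle R,\cdot\rangle}\bg*\varrho$, the verification that $\bg*\varrho\in L^1(\R^d)\cap L^2(\R^d)$ via Young's inequality, the factorization $\widehat{\bg*\varrho}=\widehat{\bg}\cdot\widehat{\varrho}$, and $L^2$-inversion over the exhausting cubes $[-A,A]^d$. The extra remarks you add (consistency of the $L^1$- and $L^2$-transforms on $L^1\cap L^2$, and square-integrability of the product via $\widehat{\varrho}\in L^\infty$) are sound refinements of the same argument rather than a different approach.
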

\begin{proof}
Similarly to the previous section, we have that
\begin{align}\label{ch3:sec5.1}
\V_f(X;\logs) = \e^{-\langle R,\logs\rangle}\bg*\varrho(\logs),
\end{align}
and, for all $u\in\R^d$
\begin{align}\label{L2-fouval}
 \widehat{\bg*\varrho}(u)=\widehat{\bg}(u)\cdot\widehat{\varrho}(u).
\end{align}
Now, since $\bg\in L^1(\R^d)\cap L^2(\R^d)$, we get that
$\widehat{\bg}\in L^2(\R^d)$ and
$\Vert\bg\Vert_{L^2(\R^d)}=\Vert\widehat{\bg}\Vert_{L^2(\R^d)}$; the
proofs are analogous to Theorem 9.13 in \citeN{Rudin87}.
Moreover, we have that $\bg*\varrho\in L^2(\R^d)$, because
\begin{align*}
\Vert\bg*\varrho\Vert^2_{L^2(\R^d)}
 &\le \varrho(\R^d)^2\,\Vert\bg\Vert^2_{L^2(\R^d)}<\infty.
\end{align*}
Therefore, since also $\bg*\varrho\in L^1(\R^d)$ we get that
$\bg*\varrho\in L^1(\R^d)\cap L^2(\R^d)$
and, analogously again to Theorem 9.13 in \citeN{Rudin87}, we get
that $\widehat{\bg*\varrho}\in L^2(\R^d)$
and $\Vert\bg*\varrho\Vert_{L^2(\R^d)}=\Vert\widehat{\bg*\varrho}\Vert_{L^2(\R^d)}$.

Therefore, the Fourier transform in \eqref{L2-fouval} can be  inverted
and the inversion is given as an $L^2$-limit; more precisely, we have
\begin{align}\label{dc-val-1}
\Vert\bg*\varrho-\psi_A\Vert_{L^2(\R^d)}\rightarrow0
\qquad (A\rightarrow\infty)
\end{align}
where
\begin{align}\label{dc-val-2}
\psi_A(\logs)
 &= \frac{1}{(2\pi)^d}\int_{[-A,A]^d}
    \!\e^{-i\langle u,\logs\rangle}\widehat{\bg*\varrho}(u)\ud u\nonumber\\
 &= \frac{1}{(2\pi)^d}\int_{[-A,A]^d}\!\e^{-i\langle u,\logs\rangle}
    \widehat{f}(iR-u)\varphi_{X_T}(u-iR)\ud u.
\end{align}
Finally, \eqref{ch3:sec5.1} and \eqref{dc-val-1}--\eqref{dc-val-2}
yield the option price function.
\end{proof}

\begin{remark}
The problem becomes significantly simpler when dealing with the product
$f_1(X_T)f_2(Y_T)$ of a continuous payoff function $f_1$ for the variable
$X$ and a discontinuous payoff function $f_2$ for the other variable $Y$,
even in the absence of Lebesgue densities. A typical example of this situation
is the barrier option payoff, where $f_1(x)=(\e^{x}-K)^+$ and
$f_2(y)=1_{\{\e^{y}>B\}}$. Then, one can make a measure change using the
(normalized) continuous payoff as the Radon--Nikodym derivative, apply Theorem
\ref{valuation} and then Theorem \ref{valuation-dc}; this leads to pointwise
convergence of the valuation function. The measure change argument is outlined
in \citeN{BorovkovNovikov02} and \citeN[Theorem 3.5]{Papapantoleon06}.
\end{remark}

\section{Sensitivities -- Greeks}
\label{greeks}

The structure of the asset price model as an exponential semimartingale,
and the resulting structure of the option price function, allows us to
easily derive general formulas for the sensitivities of the option price
with respect to model parameters. In this section we will focus on the
sensitivities with respect to the initial value, i.e. delta and gamma,
while sensitivities with respect to other parameters can be derived
analogously.

Let us rewrite the option price function as a function of the initial
value, using that $S_0=\e^{-\logs}$, as follows:
\begin{align}\label{price}
\mathbb{V}_f(X;S_0)
 &= \frac{1}{2\pi}\int_{\R} S_0^{R-iu} M_{X_T}(R-iu) \widehat{f}(u+iR) \ud u.
\end{align}
The delta of an option is the partial derivative of
the price with respect to the initial value. For a generic option with payoff
$f$, we have that
\begin{align}\label{delta}
\Delta_f(X;S_0)
 &= \frac{\partial\mathbb{V}_f(X;S_0)}{\partial S_0}\nonumber\\
 &= \frac{1}{2\pi}\int_{\R} \frac{\partial}{\partial S_0}S_0^{R-iu}
     M_{X_T}(R-iu) \widehat{f}(u+iR) \ud u \nonumber\\
 &= \frac{1}{2\pi}\int_{\R} S_0^{R-1-iu} M_{X_T}(R-iu)
     \frac{\widehat{f}(u+iR)}{(R-iu)^{-1}}\ud u.
\end{align}
The gamma of an option is the partial derivative of the delta with respect
to the initial value. For a generic option with payoff $f$, we  get
\begin{align}\label{gamma}
\Gamma_f(X;S_0)
 &= \frac{\partial\Delta_f(X;S_0)}{\partial S_0}
  = \frac{\partial^2\mathbb{V}_f(X;S_0)}{\partial^2 S_0}\nonumber\\
 &= \frac{1}{2\pi}\int_{\R} S_0^{R-2-iu}
     \frac{M_{X_T}(R-iu)\widehat{f}(u+iR)}{(R-1-iu)^{-1}(R-iu)^{-1}}\ud u.
\end{align}

In the above equations we have taken \textit{for granted} that we can exchange
integration and differentiation; however, this is the crucial step and we will
provide sufficient conditions when we are allowed to do that. Using Satz
IV.5.7 in \citeN{Elstrodt99} and the elementary inequality
$|\text{Im} f|+|\text{Re} f|\le 2|f|$, we get that we can differentiate under
the integral sign if there exists an integrable function $\wp$ such
that for all $u\in\R$ and all $S_0>0$
\begin{align*}
\Big|\frac{\partial}{\partial S_0}F(u,S_0)\Big|\le \wp(u),
\end{align*}
where
\begin{align*}
 F(u,S_0) = S_0^{R-iu} M_{X_T}(R-iu) \widehat{f}(u+iR).
\end{align*}
Now we can estimate the partial derivative of the function $F$:
\begin{align}\label{D-est}
\Big|\frac{\partial}{\partial S_0}F(u,S_0)\Big|
 &= |\e^{(R-1-iu)\log S_0}||R-iu| |M_{X_T}(R-iu) \widehat{f}(u+iR)|\nonumber\\
 &\le \mathpzc{c}(1+|u|)|M_{X_T}(R-iu)||\widehat{f}(u+iR)|
  =:\wp(u).
\end{align}
Analogously we can estimate for the second derivative of $F$:
\begin{align}\label{G-est}
\Big|\frac{\partial^2}{\partial S_0^2}F(u,S_0)\Big|
  &\le \mathpzc{c}'(1+|u|^2)|M_{X_T}(R-iu)||\widehat{f}(u+iR)|
   =:\wp'(u).
\end{align}

Sufficient conditions for the function $\wp$ in \eqref{D-est}, resp.
$\wp'$ in \eqref{G-est}, to be integrable are that
$|u||M_{X_T}(R-iu)|$, resp. $|u|^2|M_{X_T}(R-iu)|$, is integrable and
$\widehat{f}(\cdot+iR)$ is bounded; the first condition dictates in particular
that the measure $P_{X_T}$ -- equivalently $\varrho$ -- has a density of class
$C^1$, resp. $C^2$; see Proposition 28.1 in \citeN{Sato99}. Alternatively, a
sufficient condition is that the function $|u||\widehat{f}(u+iR)|$, resp.
$|u|^2|\widehat{f}(u+iR)|$, is integrable and $M_{X_T}(R-i\cdot)$ is bounded,
highlighting once again the interplay between the properties of the measure
and the payoff function.

\section{Examples of payoff functions}
\label{ch3:payoffs}

\subsection*{1.}
Here we list some representative examples of payoff functions used in
finance, together with their Fourier transforms and comment on whether
they satisfy some of the required assumptions for option pricing. The
calculations for the call option are provided explicitly and for other
options they follow  along the same lines.

\begin{example}[Call and put option]\label{ch3:ex-call}
The payoff of the standard call option with strike $K\in\Rp$ is $f(x)=(\e^x-K)^+$.
Let $z\in\C$ with $\Im z\in(1,\infty)$, then the
Fourier transform of the payoff function of the call option is
\begin{align}\label{ch3:eqop:1}
 \widehat{f}(z)
  &= \int_\R\e^{izx}(\e^x-K)^+\dx
   = \int_{\ln K}^\infty \e^{(1+iz)x}\dx -K\int_{\ln K}^\infty \e^{izx}\dx\nonumber\\
  &= -K^{1+iz}\frac{1}{1+iz} + K^{iz}\frac{K}{iz}
   = \frac{K^{1+iz}}{iz(1+iz)}.
\end{align}
Now, regarding the dampened payoff function of the call option, we
easily get for $R\in(1,\infty)$ that $g\in\labc\cap L^2(\R)$. The weak
derivative of $g$ is
\begin{align}
 \partial g(x) =
\left\{%
\begin{array}{ll}
    0, & \hbox{if $x<\ln K$,} \\
    \e^{-Rx}(\e^x-R\e^x+RK), & \hbox{if $x>\ln K$.} \\
\end{array}%
\right.
\end{align}
Again, we have that $\partial g\in L^2(\R)$. Therefore, $g\in H^1(\R)$ and
using Lemma \ref{sobo} we can conclude that $\widehat{g}\in\la$. Summarizing,
conditions (C1) and (C3) of Theorem \ref{valuation} are fulfilled for the
payoff function of the call option.

Similarly, for a put option, where $f(x)=(K-\e^x)^+$, we
have that
\begin{align}\label{ch3:eqop:2}
 \widehat{f}(z) = \frac{K^{1+iz}}{iz(1+iz)},
  \qquad  \Im z\in (-\infty,0).
\end{align}
Analogously to the case of the call option, we can conclude for the dampened
payoff function of the put option that $g\in\labc$ and $g\in H^1(\R)$ for $R<0$,
yielding $\widehat{g}\in\la$. Hence, conditions (C1) and (C3) of Theorem
\ref{valuation}  are also fulfilled for the payoff function of the put option.
\end{example}

\begin{example}[Digital option]\label{ch3:ex-dig}
The payoff of a digital call option with barrier $B\in\Rp$ is
$1_{\{\e^x>B\}}$. Let $z\in\C$ with $\Im z\in(0,\infty)$, then
the Fourier transform of the payoff function of the digital call
option is
\begin{align}\label{ch3:eqop:3}
 \widehat{f}(z)
   = -\frac{B^{iz}}{iz}.
\end{align}
Similarly, for a digital put option, where $f(x)=1_{\{\e^x<B\}}$,
we have that
\begin{align}\label{ch3:eqop:4}
 \widehat{f}(z) = \frac{B^{iz}}{iz},
  \qquad  \Im z\in (-\infty,0).
\end{align}
For the dampened payoff function of the digital call and put option,
we can easily check that $g\in\la$ for $R\in(0,\infty)$
and $R\in(-\infty,0)$.

Regarding the continuity and bounded variation properties of the value
function, we have that
\begin{align*}
\mathbb{V}_f(X,\logs)
 &= E\big[1_{\{\e^{X_T-\logs}>B\}}\big]
  = P\big(X_T>\log(B)+\logs\big)
  = 1-F_{X_T}\big(\log(B)+\logs\big),
\end{align*}
where $F_{X_T}$ denotes the cumulative distribution function of $X_T$.
Therefore, $\logs\mapsto\mathbb{V}_f(X,\logs)$ is monotonically decreasing,
hence it has locally bounded variation. Moreover, we can conclude
that $\logs\mapsto\mathbb{V}_f(X,\logs)$ is continuous if the
measure $P_{X_T}$ is \textit{atomless}.

Summarizing, condition (D1) is always satisfied for the payoff function of the
digital option, while the prerequisites of Theorem \ref{valuation-dc} on
continuity and bounded variation are satisfied if the measure $P_{X_T}$ does
not have atoms.
\end{example}

\begin{example}
A variant of the digital option is the so-called
\emph{asset-or-nothing digital}, where the option holder receives
one unit of the \emph{asset}, instead of \emph{currency}, depending
on whether the underlying reaches some barrier or not. The payoff of
the asset-or-nothing digital call option with barrier $B\in\Rp$ is
$f(x)=\e^x1_{\{\e^x>B\}}$, and the Fourier transform, for $z\in\C$
with $\Im z\in(1,\infty)$, is
\begin{align}
 \widehat{f}(z)
   = -\frac{B^{1+iz}}{1+iz}.
\end{align}
Arguing analogously to the previous example, we can deduce that condition
(D1) is always satisfied for the payoff function of the asset-or-nothing
digital option, while the prerequisites of Theorem \ref{valuation-dc} are
satisfied if the measure $P_{X_T}$ does not have atoms.
\end{example}

\begin{example}[Double digital option]\label{ch3:ddig}
The payoff of the double digital call option with barriers
$\underline{B},\overline{B}>0$ is
$1_{\{\underline{B}<\e^x<\overline{B}\}}$. Let $z\in\C\setminus\lbrace0\rbrace$,
then the Fourier transform of the payoff function is
\begin{align}
 \widehat{f}(z)
  = \frac{1}{iz}\left(\overline{B}^{iz}-\underline{B}^{iz}\right).
\end{align}
The dampened payoff function of the double digital option
satisfies $g\in\la$ for all $R\in\R$.

Moreover, we can decompose the value function of the double
digital option as
\begin{align*}
\mathbb{V}_f(X,\logs)
 &= \mathbb{V}_{f_1}(X,\logs)-\mathbb{V}_{f_2}(X,\logs),
\end{align*}
where $f_1(x)=1_{\{\e^{x}<\overline{B}\}}$ and
$f_2(x)=1_{\{\underline{B}\le\e^{x}\}}$. Hence, by the results of Example
\ref{ch3:ex-dig}, we get that condition (D1) is always satisfied for the
payoff function of the double digital option, while the prerequisites of
Theorem \ref{valuation-dc} are satisfied if the measure $P_{X_T}$ does not
have atoms.
\end{example}

\begin{example}[Self-quanto and power options]
The payoff of a self-quanto call option with strike $K\in\Rp$ is
$f(x)=\e^x(\e^x-K)^+$. The Fourier transform of the payoff function of the
self-quanto call option, for $z\in\C$ with $\Im z\in(2,\infty)$, is
\begin{align}
 \widehat{f}(z)
  &= \frac{K^{2+iz}}{(1+iz)(2+iz)}.
\end{align}
The payoff of a power call option with strike $K\in\Rp$ and power $2$ is
$f(x)=[(\e^x-K)^+]^2$; for $z\in\C$ with $\Im z\in(2,\infty)$, the Fourier
transform is
\begin{align}
\widehat{f}(z)
  &= -\frac{2K^{2+iz}}{iz(1+iz)(2+iz)}.
\end{align}
The payoff functions for the respective put options are defined in the
obvious way, while the Fourier transforms are identical, with the range
for the imaginary part of $z$ being respectively $(-\infty,1)$ and
$(-\infty,0)$.

Analogously to Example \ref{ch3:ex-call}, we can deduce that conditions
(C1) and (C3) of Theorem \ref{valuation} are fulfilled for the payoff
function of the self-quanto and the power option.
\end{example}

\begin{remark}
For power options of higher order we refer to Raible \citeyear[Chapter 3]{Raible00}.
\end{remark}

\subsection*{2.}
Next we present some examples of payoff functions for options on
several assets and for options on multiple functionals of one asset,
together with their corresponding Fourier transforms.

\begin{example}[Option on the minimum/maximum]\label{min-two-asset}
The payoff function of a call option on the minimum of $d$ assets is
\begin{align*}
 f(x)=(\e^{x_1}\wedge\dots\wedge\e^{x_d}-K)^+,
\end{align*}
for $x\in\R^d$. The Fourier transform of this payoff function is
\begin{align}\label{Fou-min-d}
\widehat{f}(z)
   = -\frac{K^{1+i\sum_{k=1}^d z_k}}{(-1)^d(1+i\sum_{k=1}^d z_k)\prod_{k=1}^d (iz_k)},
\end{align}
where $z\in\C^d$ with $\Im z_k>0$ for $1\le k\le d$ and
$\Im(\sum_{k=1}^d z_k)>1$; for more details we refer to Appendix
\ref{fou-min}. Then, we can easily deduce for the dampened payoff function
that $g\in L^1_{\text{bc}}(\R^d)$.

Moreover, for the put option on the maximum of $d$ assets, the payoff
function is
\begin{align*}
 f(x)=(K-\e^{x_1}\vee\dots\vee\e^{x_d})^+,
\end{align*}
for $x\in\R^d$, where $a\vee b=\max\{a,b\}$. The Fourier transform is
\begin{align}\label{Fou-max-d}
\widehat{f}(z)
 = \frac{K^{1+i\sum_{k=1}^d z_k}}{(1+i\sum_{k=1}^d z_k)\prod_{k=1}^d (iz_k)},
\end{align}
with the restriction now being $\Im z_k<0$ for all $1\le k\le d$. Again, we can
easily deduce that the dampened payoff function satisfies
$g\in L^1_{\text{bc}}(\R^d)$. Therefore, condition (A1) of Theorem
\ref{valuation-Rd} is satisfied.
\end{example}

\begin{example}
A natural example of multi-asset payoff functions are products of single
asset payoff functions. These payoff functions have the form
\begin{align*}
 f(x) = \prod_{\icc=1}^d f_\icc(x_\icc),
\end{align*}
for $x\in\R^d$, where $x_\icc\in\R$ and $f_\icc:\R\rightarrow\R_+$, for all
$1\le\icc\le d$; for example, one can consider $f_1(x_1)=(\e^{x_1}-K)^+$
and $f_2(x_2)=1_{\{\e^{x_2}>B\}}$.

The Fourier transform of these payoff functions is simply the product
of the Fourier transform of the `marginal' payoff functions, since
\begin{align*}
\widehat{f}(z)
 &= \int_{\R^d} \e^{i\langle z,x\rangle}\prod_{\icc=1}^d f_\icc(x_\icc)\dx
  = \prod_{\icc=1}^d\int_{\R} \e^{iz_\icc x_\icc}f_\icc(x_\icc)\ud x_\icc
  = \prod_{\icc=1}^d\widehat{f}_\icc(z_\icc),
\end{align*}
for $z\in\C^d$ and $z_\icc\in\C$, with $\Im z$ in an appropriate range such
that $\widehat{g}\in L^1(\R^d)$. This range, as well as other properties of
$\widehat{f}$, are dictated by the corresponding properties of the Fourier
transforms $\widehat{f}_\icc$ of the marginal payoff functions $f_\icc$.
\end{example}

\begin{remark}
Further examples of multiple asset payoff functions, such as basket
and spread options, and their Fourier transforms can be found in
\citeN{HubalekKallsen03}.
\end{remark}

\subsection*{3.}
We add a short remark on the rate of decay of the Fourier transform of
the various payoff functions and its consequence for numerical implementations.

Consider the standard call option, where the Fourier transform of
the dampened payoff function has the form, cf. \eqref{ch3:eqop:1},
\begin{align*}
\widehat{g}(u) = \frac{K^{1-R}\e^{iu\log{K}}}{(R-iu)(R-1-iu)},
  \qquad u\in\R.
\end{align*}
Then, we have that
\begin{align*}
|\widehat{g}(u)|
 \leq \frac{K^{1-R}}{\sqrt{R^2+u^2}\sqrt{(R-1)^2+u^2}}
 \leq \frac{K^{1-R}}{(R-1)^2+u^2},
\end{align*}
which shows that $\widehat{g}(u)$ behaves like $\frac{1}{u^2}$ for
$|u|>1$. On the other hand, a similar calculation for the digital
option shows that the Fourier transform of the dampened digital
payoff behaves like $\frac{1}{u}$ for $|u|>1$.

Therefore, splitting a call option into the difference of an
asset-or-nothing digital and a digital option, as many authors have
proposed in the literature (cf. e.g. \citeNP{Heston93}), is not only
`conceptually' sub-optimal, as can be seen by Theorems \ref{valuation}
and \ref{valuation-dc}. More importantly, it is also not optimal from
the numerical perspective, since the rate of decay for the digital
option is much slower than for the call option, leading to slower
numerical evaluation of the corresponding option prices.

Indeed, we have calculated the prices of call options corresponding
to 11 strikes and 10 maturities, first using the formula for the call
option, and then representing the call option as the difference of two
digital options. The numerical calculation using the second method
lasts twice as long (6 secs compared to less than 3 secs) in a
standard Matlab implementation.

\section{Examples of driving processes}
\label{LA}

The application of Fourier transform valuation formulas in practice requires
the explicit knowledge of the moment generating function of the underlying
random variable. As such, Fourier methods are tailor-made for pricing European
options in \lev and affine models, since in these models one typically knows
the moment generating function explicitly (at least up to the solution of a
Riccati equation). In order to give a flavor, we present here an overview of
\lev and affine processes, referring to the literature for specific formulas
and proofs.

In \lev processes, the moment generating function of the random variable is
described by the celebrated \lk formula; for a \lev process \prozess[H] with
triplet ($b,c,\lambda$) we have:
\begin{align}
E\big[\e^{\scal{u}{H_t}}\big] = \exp\left(\kappa(u)\cdot t\right),
\end{align}
for suitable $u\in\R^d$, where the cumulant generating function is
\begin{align}\label{lk-mgf-general}
\kappa(u) = \scal{b}{u} + \frac{1}{2}\scal{u}{cu}
          + \int_{\R^d}\left(\e^{\scal{u}{x}}-1-\scal{u}{h(x)}\right)\lambda(\dx);
\end{align}
here $h$ denotes a suitable truncation function. The most popular \lev models
are the VG and CGMY processes (cf. \citeNP{MadanSeneta90}, 
Carr et al. \citeyearNP{Carretal02}), the hyperbolic, NIG and GH processes
(cf. Eberlein and Keller \citeyearNP{EberleinKeller95},
\citeNP{Barndorff-Nielsen98}, \citeNP{Eberlein01a}), and the Meixner model (cf.
\citeNP{SchoutensTeugels98}).

In affine processes, the moment generating functions are described by the
very definition of these processes. Let \prozess[X] be an affine process
on the state space $D=\R^m\times\Rp^n\subseteq\R^d$, starting from $x\in D$;
i.e., under suitable conditions, there exist functions
$\phi:[0,T]\times\mathcal{I}\to\R$ and
$\psi:[0,T]\times\mathcal{I}\to\R^d$ such that
\begin{align}
E_x\big[\e^{\scal{u}{X_t}}\big] = \exp\left(\phi_t(u) + \scal{\psi_t(u)}{x} \right),
\end{align}
for all $(t,u,x)\in[0,T]\times\mathcal{I}\times D$, $\mathcal{I}\subseteq\R^d$.
The functions $\phi$ and $\psi$ satisfy generalized Riccati equations, while
their time derivatives
\begin{align*}
 F(u) = \frac{\partial}{\partial t}\big|_{t=0}\phi_t(u)
\quad\text{ and }\quad
 R(u) = \frac{\partial}{\partial t}\big|_{t=0}\psi_t(u),
\end{align*}
are of \lk form \eqref{lk-mgf-general}; we refer to
\shortciteN{DuffieFilipoviSchachermayer03} and 
Keller-Ressel \citeyear{KellerRessel08} for
comprehensive expositions and the necessary details. The class of affine
processes contains as special cases -- among others -- many stochastic
volatility models, such as the \citeN{Heston93} model, the BNS model (cf.
Barndorff-Nielsen and Shephard
\citeyearNP{Barndorff-NielsenShephard01}, \citeNP{NicolatoVenardos03}), and
time-changed \lev models (cf. \shortciteNP{Carretal03}, \citeNP{Kallsen06}).

\section{Numerical illustration}
\label{ch3:sc5}

As an illustration of the applicability of Fourier-based valuation formulas
even for the valuation of options on several assets, we present a numerical
example on the pricing of an option on the minimum of two assets. As driving
motions we consider a 2d normal inverse Gaussian (NIG) \lev process and a
2d affine stochastic volatility model.

Let $H$ denote a 2d NIG random variable,
i.e. \[H=(H^1,H^2)\sim\text{NIG}_2(\alpha,\beta,\delta,\mu,\Delta),\]
where the parameters have the following domain of definition:
$\alpha,\delta\in\R_+$, $\beta,\mu\in\R^2$, and
$\Delta\in\R^{2\times2}$ is a symmetric, positive-definite matrix; w.l.o.g.
we can assume that $\text{det}(\Delta)=1$; in addition,
$\alpha^2>\scal{\beta}{\Delta\beta}$. Then, the moment generating function of
$H$, for $u\in\R^2$ with $\alpha^2-\langle \beta+u,\Delta(\beta+u)\rangle\ge0$,
is
\begin{align}\label{2d-NIG-MGF}
M_{H}(u)
 &= \exp\left( \langle u,\mu\rangle
       + \delta\left(\sqrt{\alpha^2-\langle \beta,\Delta\beta\rangle}
           -\sqrt{\alpha^2-\langle \beta+u,\Delta(\beta+u)\rangle}\right)\right).
\end{align}
In the $\text{NIG}_2$ model, we specify the parameters $\alpha,\beta,\delta$
and $\Delta$, and the drift vector $\mu$ is determined by the martingale
condition. Note that the marginals $H^\icc$ are also NIG
distributed (cf. \citeNP[Theorem 1]{Blaesild81}), hence the drift vector can be
easily evaluated from the cumulant of the univariate NIG law. The covariance
matrix corresponding to the $\text{NIG}_2$-distributed random variable $H$ is
\begin{align*}
\Sigma_{\text{NIG}}
 &= \delta \left( \alpha^2-\scal{\beta}{\Delta\beta} \right)^{-\frac{1}{2}}
    \left( \Delta + \left( \alpha^2-\scal{\beta}{\Delta\beta} \right)^{-1}
           \Delta\beta\beta^{\top}\Delta\right),
\end{align*}
cf. \citeN[eq. (4.15)]{Prause99}. A comprehensive exposition of the
multivariate generalized hyperbolic distributions can be found in
\citeN{Blaesild81}; cf. also \citeN{Prause99}.

We will also consider the following affine stochastic volatility model
introduced by \citeN{DempsterHong02}, that extends the framework of Heston
\citeyear{Heston93} to the multi-asset case. Let $H=(H^1,H^2)$ denote the logarithm
of the asset price processes $S=(S^1,S^2)$, i.e. $H^\icc=\log S^\icc$; then,
$H^\icc$, $\icc=1,2$ satisfy the following SDEs:
\begin{align*}
\ud H^1_t &= -\frac{1}{2}\sigma_1^2v_t\dt + \sigma_1\sqrt{v_t}\ud W_t^1 \nonumber\\
\ud H^2_t &= -\frac{1}{2}\sigma_2^2v_t\dt + \sigma_2\sqrt{v_t}\ud W_t^2\\
\ud v_t &= \kappa(\mu-v_t)\dt + \sigma_3\sqrt{v_t}\ud W_t^3, \nonumber
\end{align*}
with initial values $H_0^1,H_0^2,v_0>0$. The parameters have the
following domain of definition: $\sigma_1,\sigma_2,\sigma_3>0$ and
$\mu,\kappa>0$. Here $W=(W^1,W^2,W^3)$ denotes a 3-dimensional
Brownian motion with correlation coefficients
\begin{align*}
\scal{W^1}{W^2}=\rho_{12},\quad
\scal{W^1}{W^3}=\rho_{13},\quad\text{ and }\quad
\scal{W^2}{W^3}=\rho_{23}.
\end{align*}
The moment generating function of the vector $H=(H^1,H^2)$ has been calculated
by \citeN{DempsterHong02}; for $u=(u_1,u_2)\in\R^2$ we have
\begin{align*}
M_{H_t}(u)
 &= \exp\bigg( \scal{u}{H_0}
              + \frac{2\zeta(1-\e^{-\theta t})}{2\theta - (\theta-\gamma)(1-\e^{-\theta t})}\cdot v_0
  \nonumber\\
  &\qquad\qquad - \frac{\kappa\mu}{\sigma_3^2}
            \Big[ 2\cdot\log\Big(\frac{2\theta -
                 (\theta-\gamma)(1-\e^{-\theta t})}{2\theta}\Big) + (\theta-\gamma)t \Big] \bigg),
\end{align*}
where $\zeta=\zeta(u)$, $\gamma=\gamma(u)$, and $\theta=\theta(u)$ are
\begin{align*}
\zeta &=
 \frac{1}{2}\Big(\sigma_1^2u_1^2 + \sigma_2^2u_2^2 + 2\rho_{12}\sigma_1\sigma_2u_1u_2
                  - \sigma_1^2u_1 - \sigma_2^2u_2\Big),\\
\gamma &=
 \kappa - \rho_{13}\sigma_1\sigma_3u_1 - \rho_{23}\sigma_2\sigma_3u_2,\\
\theta &=
 \sqrt{\gamma^2 - 2\sigma_3^2\zeta}.
\end{align*}

We can deduce that all three models satisfy conditions (A2) and (A3) of Theorem
\ref{valuation-Rd} for certain values of $R$. Explicit calculations for the 2d
NIG model are deferred to Appendix \ref{regularity-2D-NIG}; analogous calculations
yield the results for the other models.

The Fourier transform of the payoff function
$f(x)=(\e^{x_1}\wedge\e^{x_2}-K)^+$, $x\in\R^2$, corresponding to the option on
the minimum of two assets is given by \eqref{Fou-min-d} for $d=2$, and we get
that condition (A1) of Theorem \ref{valuation-Rd} is satisfied for $R_1,R_2>0$
such that $R_1+R_2>1$.

Therefore, applying Theorem \ref{valuation-Rd}, the price of an option on the
minimum of two assets is given by
\begin{align*}
\mathbb{MTA}_T(S^1,S^2;K)
 &= \frac{1}{4\pi^2}\int_{\R^2}
    (S_0^1)^{R_1+iu_1}(S_0^2)^{R_2+iu_2}
    M_{H_T}(R_1+iu_1,R_2+iu_2)\nonumber\\
 & \qquad \times
   \frac{K^{1-R_1-R_2-iu_1-iu_2}}{(R_1+iu_1)(R_2+iu_2)(R_1+R_2-1+iu_1+iu_2)}\ud u,
\end{align*}
where $M_{H_T}$ denotes the moment generating function of the random vector
$H_T$, and $R_1,R_2$ are suitably chosen.

In the numerical illustrations, we consider the following parameters: strikes
\[
K=\left\{85,90,92.5,95,97.5,100,102.5,105,107.5,110,115\right\}
\]
and times to maturity
\[
T = \left\{\tfrac{1}{12},\tfrac{2}{12},0.25,0.50,0.75,1.00\right\}.
\]
In the 2d NIG model, we consider some typical parameters, e.g.
$S_0^1=100$, $S_0^2=95$, $\alpha=6.20$, $\beta_1=-3.80$, $\beta_2=-2.50$ and
$\delta = 0.150$; we consider two matrices
$\Delta^+=\bigl(\begin{smallmatrix}1&0\\0&1\end{smallmatrix}\bigr)$ and
$\Delta^-=\bigl(\begin{smallmatrix}1&-1\\-1&2\end{smallmatrix}\bigr)$, which
give positive and negative correlations respectively; indeed we get that
\begin{align*}
\Sigma_{\text{NIG}}^+ =
\begin{pmatrix}
    0.0646  &  0.0191\\
    0.0191  &  0.0481
\end{pmatrix}
\quad \text{and} \quad
\Sigma_{\text{NIG}}^- =
\begin{pmatrix}
    0.0287  & -0.0258\\
   -0.0258  &  0.0556
\end{pmatrix}.
\end{align*}
The option prices in these two cases are exhibited in Figure \ref{Fig:2NIG}.

\begin{figure}
\begin{center}
 \includegraphics[width=6.250cm,keepaspectratio=true]{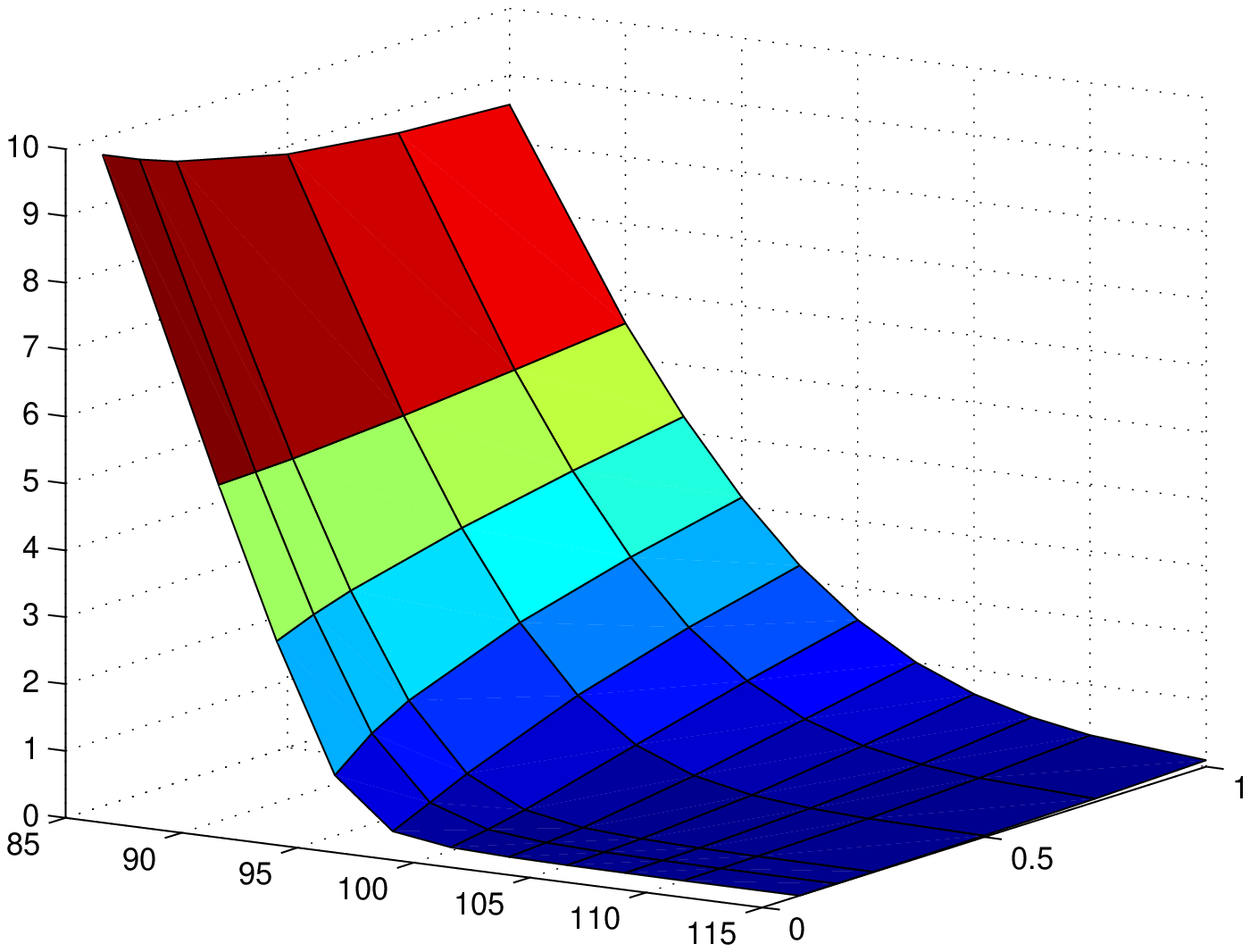}
 \includegraphics[width=6.250cm,keepaspectratio=true]{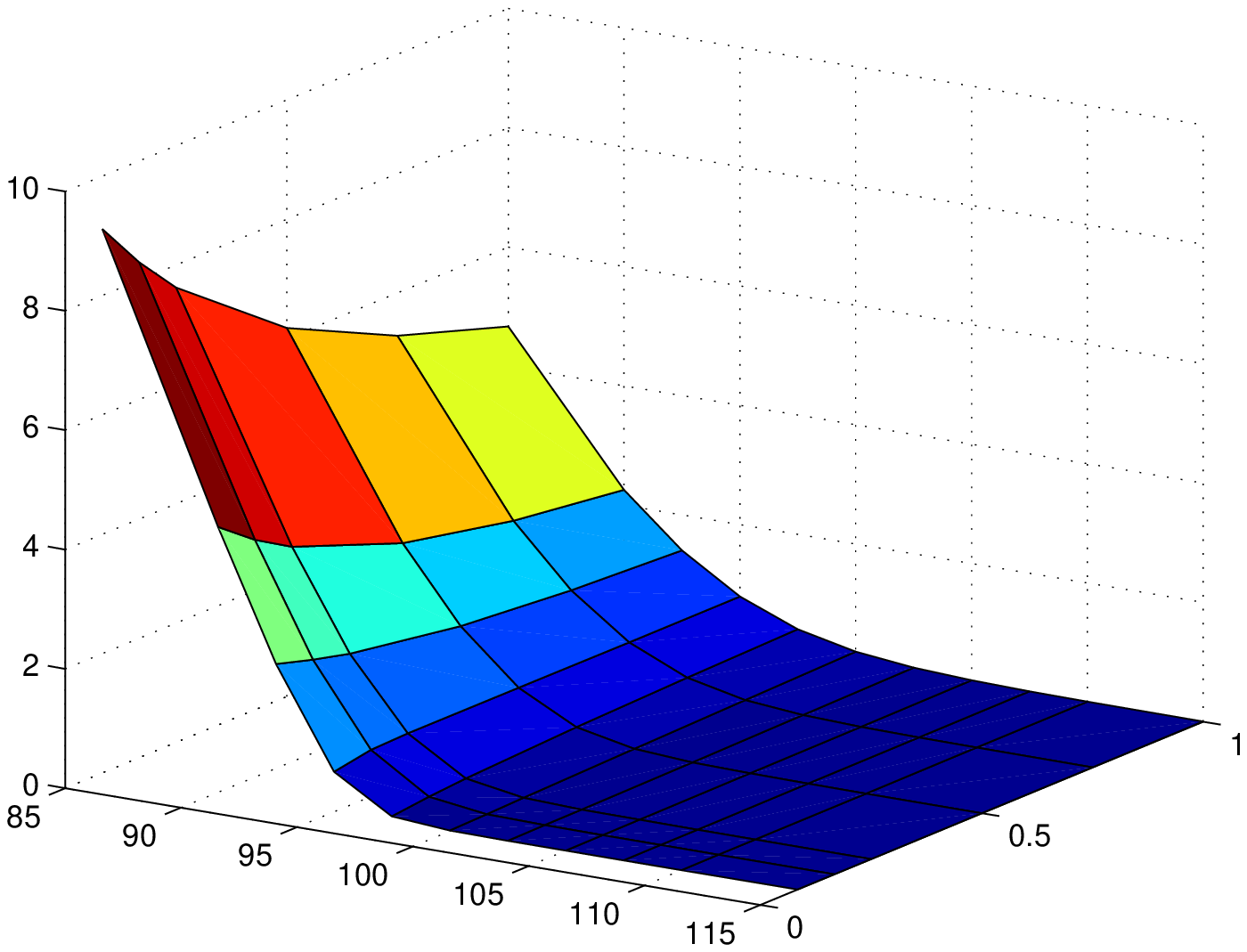}
  \caption{Option prices in the 2d NIG model with positive (left) and
           negative (right) correlation.}
  \label{Fig:2NIG}
\end{center}
\end{figure}

Finally, in the stochastic volatility model we consider the parameters used in
\citeN{DempsterHong02}, that is $S_0^1=96$, $S_0^2=100$, $\sigma_1=0.5$,
$\sigma_2=1.0$, $\sigma_3=0.05$, $\rho_{12}=0.5$, $\rho_{13}=0.25$,
$\rho_{23}=-0.5$, $v_0=0.04$, $\kappa=1.0$ and $\mu=0.04$; the option prices are
shown in Figure \ref{Fig:2SV}.

\begin{figure}
\begin{center}
 \includegraphics[width=8.00cm,keepaspectratio=true]{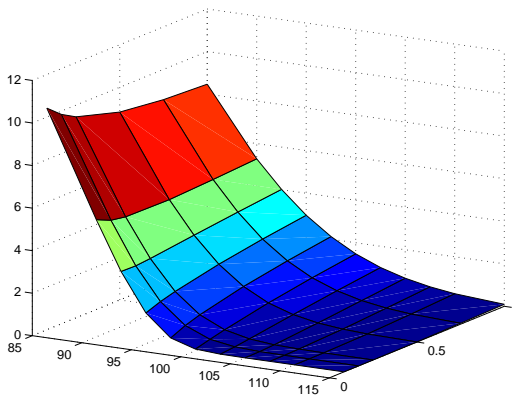}
  \caption{Option prices in the 2d stochastic volatility model.}
  \label{Fig:2SV}
\end{center}
\end{figure}

\appendix
\section{Fourier transforms of multi-asset options}
\label{fou-min}

In this appendix we outline the derivation of the Fourier transform
corresponding to the payoff function of an option on the minimum of
several assets; the derivation for the maximum is completely analogous
and therefore omitted.

The payoff of a (call) option on the minimum of $d$ assets is
\begin{align*}
(S^1\wedge S^2\wedge\dots\wedge S^d -K)^+.
\end{align*}
The payoff function $f$ corresponding to this option is given,
for $x\in\R^d$, by
\begin{align*}
f(x)
 &=(\e^{x_1}\wedge\e^{x_2}\wedge\dots\wedge\e^{x_d} -K)^+
  = (\e^{x_1\wedge x_2\wedge\dots\wedge x_d}-K)^+. 
\end{align*}
The following decomposition holds, if $x_i\neq x_j$ for $i\neq j$,
$1\le i,j\le d$
\begin{align*}
f(x)
  = \sum_{i=1}^d(\e^{x_i}-K)^+1_{\{x_i\le x_j, \forall j\}}
  = \sum_{i=1}^d(\e^{x_i}-K)
    \prod_{\substack{j=1\\j\neq i}}^d1_{\{k<x_i<x_j\}},
\end{align*}
where $k=\log K$. Define also the auxiliary functions $f_i$, $1\le i\le d$, where
\begin{align*}
f_i(x)= (\e^{x_i}-K)\prod_{\substack{j=1\\j\neq i}}^d1_{\{k<x_i<x_j\}}.
\end{align*}

The dampened payoff function is $g(x)=\e^{-\langle R,x\rangle}f(x)$,
where $R\in\R^d$; we define analogously the dampened $f_i$-functions,
i.e. $g_i(x)=\e^{-\langle R,x\rangle}f_i(x)$. For simplicity, we first
calculate the Fourier transform of the dampened $f_1$-function; for
$u\in\R^d$ we get
\begin{align}
\widehat{g}_1(u)
 &= \int_{\R^d}\e^{\langle iu-R,x\rangle}(\e^{x_1}-K)\prod_{j=2}^d1_{\{k<x_1\le x_j\}}\dx\nonumber\\
 &= \int_k^{\infty}\int_{x_1}^{\infty}\dots\int_{x_1}^{\infty}
    \e^{\langle iu-R,x\rangle}(\e^{x_1}-K)\dx_d\dots\dx_1 \nonumber\\
 &= \int_k^{\infty}\e^{(iu_1-R_1)x_1}(\e^{x_1}-K)
    \left(\prod_{j=2}^d\int_{x_1}^{\infty}\e^{(iu_j-R_j)x_j}\dx_j\right)\dx_1 \nonumber\\
 &= \int_k^{\infty}\e^{(iu_1-R_1)x_1}(\e^{x_1}-K)
    \prod_{j=2}^d\left(-\frac{\e^{(iu_j-R_j)x_1}}{iu_j-R_j}\right)\dx_1\nonumber\\
 &= \frac{1}{\prod_{j=2}^d(R_j-iu_j)}
    \int_k^{\infty}\e^{\sum_{j=1}^d(iu_j-R_j)x_1}(\e^{x_1}-K)\dx_1 \nonumber\\
 &= \frac{1}{\prod_{j=2}^d(R_j-iu_j)}\left(
     -\frac{K^{1+\sum_{j=1}^d(iu_j-R_j)}}{1+\sum_{j=1}^d(iu_j-R_j)}
     +\frac{K^{1+\sum_{j=1}^d(iu_j-R_j)}}{\sum_{j=1}^d(iu_j-R_j)}\right)\nonumber\\
 &= \frac{K^{1+\sum_{j=1}^d(iu_j-R_j)}}
         {\prod_{j=2}^d(R_j-iu_j)\times\left(1+\sum_{j=1}^d(iu_j-R_j)\right)
           \times\left(\sum_{j=1}^d(iu_j-R_j)\right)} \nonumber,
\end{align}
subject to the conditions $R_j>0$ for all $j\ge2$ and
$\sum_{j=1}^dR_j>1$.

Hence, in general we have that
\begin{align*}
\widehat{g}_l(u)
 &= \frac{K^{1+\sum_{j=1}^d(iu_j-R_j)}}
         {\prod_{\substack{j=1\\j\neq l}}^d(R_j-iu_j)
          \times\left(1+\sum_{j=1}^d(iu_j-R_j)\right)\times\left(\sum_{j=1}^d(iu_j-R_j)\right)},
\end{align*}
subject to the conditions $R_j>0$ for all $1\le j\le d$ and
$\sum_{j=1}^dR_j>1$.

Now, we recall that $f(x)=\sum_{l=1}^df_l(x)$, hence $g(x)=\sum_{l=1}^dg_l(x)$
which yields $\widehat{g}(u)=\sum_{l=1}^d\widehat{g}_l(u)$; therefore
\begin{align*}
\widehat{g}(u)
 &= \sum_{l=1}^d\frac{K^{1+\sum_{j=1}^d(iu_j-R_j)}}
         {\prod_{\substack{j=1\\j\neq l}}^d(R_j-iu_j)
          \times\left(1+\sum_{j=1}^d(iu_j-R_j)\right)\times\left(\sum_{j=1}^d(iu_j-R_j)\right)} \nonumber\\
 &= \frac{K^{1+\sum_{j=1}^d(iu_j-R_j)}}
         {\left(1+\sum_{j=1}^d(iu_j-R_j)\right)\times\left(\sum_{j=1}^d(iu_j-R_j)\right)}
    \;\sum_{l=1}^d\frac{R_l-iu_l}{\prod_{j=1}^d(R_j-iu_j)}\nonumber\\
 &= \frac{-K^{1+\sum_{j=1}^d(iu_j-R_j)}}
         {(-1)^d\prod_{j=1}^d(iu_j-R_j)\left(1+\sum_{j=1}^d(iu_j-R_j)\right)}.
\end{align*}
This we can also rewrite as
\begin{align}
\widehat{f}(z)
 &= - \frac{K^{1+i\sum_{j=1}^d z_j}}
           {(-1)^d\prod_{j=1}^d(iz_j)\left(1+i\sum_{j=1}^dz_j\right)},
\end{align}
subject to the conditions $\Im z_j>0$ for all $1\le j\le d$ and
$\sum_{j=1}^d \Im z_j>1$.

\section{Calculations for the 2d NIG model}
\label{regularity-2D-NIG}

By the moment generating function of the 2d NIG process, cf. \eqref{2d-NIG-MGF},
it is evident that assumption (A2) is satisfied for $R\in\R^2$ with
$\alpha^2-\langle\beta+R,\Delta(\beta+R)\rangle\ge0$. In order to verify
condition (A3) we have to show that the function $u\mapsto M_H(R+iu)$ is
integrable; it suffices to show that the real part of the exponent of
$M_H(R+iu)$ decays like $-|u|$. We have
\begin{align*}
\log\big(M_H(R+iu)\big)
 &= i \langle  \mu,u \rangle +  \langle  \mu,R\rangle  + \delta \sqrt{ \alpha^2 - \langle \beta,\Delta \beta \rangle }\\
 &\quad - \delta \sqrt{ \alpha^2 - \langle \beta+R+iu, \Delta(\beta + R +iu)\rangle}\,.
\end{align*}
Recall that the product $\langle\cdot,\cdot\rangle$ over $\C^d$ is defined
as follows: for $u,v\in\C^d$ set $\langle u,v\rangle = \sum_iu_iv_i$. Then
\begin{multline*}
\langle \beta+R+iu, \Delta(\beta + R +iu)\rangle \\ =
 \langle \beta+R, \Delta(\beta + R )\rangle - \langle u, \Delta u\rangle + 2 i \langle \beta+R, \Delta u\rangle
\end{multline*}
and since
$\sqrt{z} = \sqrt{\frac{1}{2}(|z| + \Re(z))} + i \frac{\Im(z)}{|\Im(z)|}\sqrt{\frac{1}{2}(|z| - \Re(z))}$,
we get
\begin{align*}
\lefteqn{ \Re\big(  \log\big( M_H(R+iu)\big)\big)}\\
&=
   \langle  \mu,R\rangle  + \delta \sqrt{ \alpha^2 - \langle
\beta,\Delta \beta \rangle }
-\frac{\delta}{\sqrt{2}}\Big\{
\big| \alpha^2 - \langle \beta+R+iu, \Delta(\beta + R +iu)\rangle \big|  \\
&\quad+ \alpha^2 - \langle \beta+R,\Delta( \beta+R) \rangle + \langle u,
\Delta u\rangle  \Big\}^{1/2} \\
&\le
  \langle  \mu,R\rangle  + \delta \sqrt{ \alpha^2 - \langle \beta,\Delta
\beta \rangle }
-  \delta \sqrt{ \alpha^2 - \langle \beta+R,\Delta (\beta +R)\rangle +
\langle u, \Delta u\rangle  } \\
&\le
\langle  \mu,R\rangle  + \delta \sqrt{ \alpha^2 - \langle \beta,\Delta
\beta \rangle }
- \delta \sqrt{\lambda_{\min}} |u| \,,
\end{align*}
where $\lambda_{\min}$ denotes the smallest eigenvalue of the matrix $\Delta$.

\bibliographystyle{chicago}
\bibliography{references}

\end{document}